\documentclass[11pt]{article}
\usepackage[margin=1in]{geometry}

\usepackage{times}
\usepackage{amsmath}
\usepackage{algorithm}
\usepackage{amsthm}
\usepackage{graphicx,epsfig}
\usepackage{caption}
\usepackage[noend]{algpseudocode}

\usepackage{tikz}

\pagenumbering{gobble}

\theoremstyle{plain} \numberwithin{equation}{section}
\newtheorem{theorem}{Theorem}[section]
\newtheorem{corollary}[theorem]{Corollary}

\newtheorem{lemma}[theorem]{Lemma}

\theoremstyle{plain}
\newtheorem{invariant}[theorem]{Invariant}

 \topmargin-2cm

\newcommand{\N}{\mathcal{N}}

\newcommand{\C}{\mathcal{C}}

\newcommand{\eps}{\epsilon}

\newcommand{\dur}{dur}

\newcommand{\psdur}{psdur}

\ifdefined\DEBUG

\newcommand{\fab}[1]{\textcolor{red}{#1}}

  \def\rem#1{{\marginpar{\raggedright\scriptsize #1}}}
  \newcommand{\fabr}[1]{\rem{\textcolor{red}{$\bullet$ #1}}}
  \newcommand{\sayr}[1]{\rem{\textcolor{magenta}{$\bullet$ #1}}}
  \newcommand{\janr}[1]{\rem{\textcolor{blue}{$\bullet$ #1}}}
  \newcommand{\shar}[1]{\rem{\textcolor{orange}{$\bullet$ #1}}}
  \newcommand{\quar}[1]{\rem{\textcolor{green}{$\bullet$ #1}}}
\else
  \newcommand{\fab}[1]{#1}

  \newcommand{\fabr}[1]{}
  \newcommand{\sayr}[1]{}
  \newcommand{\janr}[1]{}
  \newcommand{\shar}[1]{}
  \newcommand{\quar}[1]{}
\fi

\def\cN{{\cal N}}

\def\cE{{\cal E}}
\newcommand {\ignore} [1] {}
\def\eps{\epsilon}

\def\MathP{\hbox{\rm I\kern-2pt P}}
\newcommand{\Prob}{\MathP}

\newcommand{\ins}{\texttt{handle-insertion}}
\newcommand{\del}{\texttt{handle-deletion}}

\newcommand{\dout}{d_{\texttt{out}}}

\newcommand{\setlvl}{\texttt{set-level}}
\newcommand{\updateins}{\texttt{update-color-edge-insertion}}
\newcommand{\update}{\texttt{update-color}}

\newcommand{\base}{3}
\newcommand{\cp}[1]{\mathcal{C}^+_{#1}}
\newcommand{\cP}{\mathcal{P}}
\newcommand{\colorarr}[1]{\mathcal{C}_{#1}}
\newcommand{\p}[1]{\mathcal{P}_{#1}}
\newcommand{\cc}[2]{\mu^+_{#1}(#2)}
\newcommand{\recolor}{\texttt{recolor}}
\newcommand{\detc}{\texttt{det-color}}
\newcommand{\randc}{\texttt{rand-color}}

\newcommand{\pp}[1]{p^+_{#1}}
\newcommand{\po}[1]{p_{#1}}
\newcommand{\lbelow}{\phi_v}

\newcommand{\down}[1]{\mathcal{D}_{#1}}
\newcommand{\up}[1]{\mathcal{U}_{#1}}
\newcommand{\epoch}{\mathcal{E}}

\newcommand{\al}[2]{A_{#1}^{#2}}

\begin{document}

\title{Fully Dynamic $(\Delta+1)$-Coloring in Constant Update Time}

\author{Sayan Bhattacharya\thanks{University of Warwick, \texttt{jucse.sayan@gmail.com}} \and Fabrizio Grandoni\thanks{IDSIA, \texttt{fabrizio@idsia.ch}. Partially supported by the SNSF Excellence Grant 200020B$\_$182865/1} \and Janardhan Kulkarni\thanks{Microsoft, \texttt{jakul@microsoft.com}} \and Quanquan C. Liu\thanks{MIT, \texttt{quanquan@mit.edu}} \and Shay Solomon\thanks{Tel Aviv University, \texttt{solo.shay@gmail.com}} \and}

\date{}

\maketitle

\thispagestyle{empty} 

\newcounter{list}

\renewcommand{\L}{\mathcal{L}}

\begin{abstract}
\noindent 
The problem of (vertex) $(\Delta+1)$-coloring a graph of maximum degree $\Delta$ has been extremely well-studied over the years in various settings and models.
Surprisingly, for the dynamic setting, almost nothing was known until recently.
In SODA'18, Bhattacharya, Chakrabarty, Henzinger and Nanongkai devised a randomized data structure for maintaining a $(\Delta+1)$-coloring
with $O(\log \Delta)$ expected amortized update time. In this paper, we present a $(\Delta+1)$-coloring data structure that achieves a constant amortized update time and show that this time bound holds not only in expectation but also with high probability.
\footnote{An earlier version of this paper started to circulate in early July 2019.}
\end{abstract}

\newpage

\setcounter{page}{1}
\pagenumbering{arabic}

\section{Introduction}

Vertex coloring is one of the most fundamental and best studied graph problems. Consider any integral parameter $\lambda > 0$, an undirected graph $G = (V, E)$ with $n$ nodes and $m$ edges, and  a {\em palette} $\C = \{1, \ldots, \lambda\}$ of $\lambda$ colors. A {\em $\lambda$-coloring} in $G$ is simply a function $\chi : V \rightarrow \C$ which assigns a color $\chi(v) \in \C$ to each vertex $v \in V$. Such a coloring is called {\em proper} iff no two neighboring nodes in $G$ get the same color. The main goal is to compute a proper $\lambda$-coloring in the input graph $G = (V, E)$ such that $\lambda$ is as small as possible. Unfortunately this problem is NP-hard and even extremely hard to approximate: for any constant $\eps>0$, there is no polynomial-time approximation algorithm with approximation factor $n^{1-\eps}$ unless $P\neq NP$ \cite{FK98,KP06,Z07}. In contrast, there is a textbook greedy algorithm that runs in $O(m+n)$ time and computes a $(\Delta+1)$-coloring when $\Delta$ is an upper bound on the maximum degree of the input graph $G = (V, E)$.

We address the problem of maintaining a proper $(\Delta+1)$-coloring in the \emph{fully dynamic} setting. Here, the input graph $G = (V, E)$ changes via a sequence of {\em updates}, where each update consists of the insertion or deletion of an edge in $G$. There is a {\em fixed} parameter $\Delta > 0$ such that the maximum degree in $G$ remains upper bounded by $\Delta$ throughout this update sequence. We want to design a data structure that is capable of maintaining a proper $(\Delta+1)$-coloring in such a dynamic graph $G$. The time taken by the data structure to handle an update is called its {\em update time}. We say that a data structure has an {\em amortized} update time of $O(\gamma)$ iff starting from an empty graph, it takes at most $O(t \cdot \gamma)$ time to handle any sequence of $t$ updates. Our goal is to ensure that the update time of our data structure is as small as possible.

There is a naive data structure for this problem that has $O(\Delta)$ update time, which works as follows. Suppose that we are maintaining a proper $\Delta+1$-coloring $\chi : V \rightarrow \C$ in $G$. At this point, if an edge gets deleted from the graph, then we do nothing, as the coloring $\chi$ continues to remain proper. Otherwise, if an edge $uv$ gets inserted into $G$, then we first check if $\chi(u) = \chi(v)$. If not, we do nothing. If yes, then we pick an arbitrary endpoint $x \in \{u, v\}$, and by scanning all its neighbors we identify a {\em blank} color $c' \in \C$ for $x$ (one that is not assigned to any of its neighbors). Such a blank color is guaranteed to exist, since $x$ has at most $\Delta$ neighbors and the palette $\C$ consists of $\Delta+1$ colors. We now {\em recolor} the node $x$ by assigning it the color $c'$. This  results in a proper $(\Delta+1)$-coloring in the current graph. The time taken to implement this procedure is proportional to the degree of $x$, hence it is at most $O(\Delta)$. 

It is natural to ask if we can beat  this naive data structure. In particular, can we design a data structure for this problem that has {\em polylogarithmic} update time? In SODA'18, this question was answered affirmatively  by Bhattacharya, Chakrabarty, Henzinger and Nanongkai~\cite{BCHN18}, who obtained the following result.

\begin{theorem}
\label{th:old}\cite{BCHN18}
There is a randomized data structure that can maintain a $\Delta+1$-coloring in a dynamic graph with $O(\log \Delta)$ amortized update time in expectation.
\end{theorem}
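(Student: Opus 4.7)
The plan is to complement $\chi$ with a per-vertex hash table $H_v$ where $H_v[c]$ stores the number of neighbors of $v$ currently colored $c$, so that ``is color $c$ blank at $v$?'' is an $O(1)$ query and each neighbor's color change triggers only $O(1)$ maintenance at $v$. Edge deletions are then essentially free: decrement the two relevant counts at the two endpoints and $\chi$ remains proper. Edge insertions do the symmetric increment in $O(1)$; the only real work arises when the inserted edge $uv$ has $\chi(u) = \chi(v)$, in which case we recolor the lower-degree endpoint $v$.

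The recoloring is done by a simple rejection-sampling routine \textsc{Recolor}$(v)$: draw colors $c \in \C$ uniformly at random, one at a time, and accept the first $c$ with $H_v[c] = 0$; then set $\chi(v) \leftarrow c$ and update $H_w[\cdot]$ for every neighbor $w$ of $v$ to reflect the change. Because the accepted color is blank at $v$, no further conflicts are created, so no cascade of recolorings arises. The expected number of samples before acceptance is at most $(\Delta+1)/(\Delta+1-d(v))$, and the propagation step costs $O(d(v))$, yielding expected per-\textsc{Recolor} time $O\bigl(d(v) + (\Delta+1)/(\Delta+1-d(v))\bigr)$, which is $\Theta(\Delta)$ in the worst case.

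The main obstacle, and the heart of the argument, is the amortized analysis that reduces this $\Theta(\Delta)$ worst-case cost to $O(\log \Delta)$ per update in expectation. I would bucket vertices into $O(\log \Delta)$ geometric slack-levels $L_i = \{v : 2^{i-1} \le \Delta+1-d(v) < 2^i\}$ and introduce a potential $\Phi = \alpha \sum_v \log_2(\Delta+2-d(v))$ for an appropriate constant $\alpha$. Between two consecutive \textsc{Recolor} invocations on $v$, the degree $d(v)$ must have grown by at least $1$ via an edge insertion incident to $v$, and within each slack-bucket one can argue that the number of insertions that $v$ receives before its bucket changes is $\Theta(2^i)$. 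Charging the $\Theta(\Delta/2^i)$ expected sampling cost and the $\Theta(d(v))$ propagation cost against these $\Theta(2^i)$ insertions should then yield an amortized expected bound of $O(\log \Delta)$ per update across the $O(\log \Delta)$ slack-levels. The delicate step will be the propagation cost: the $\Theta(d(v))$ term cannot be absorbed by a single insertion, so I expect this to force either a weakening of the amortized claim to expectation only (as in the theorem) or the introduction of a lazy neighbor-update scheme that charges propagation against future queries from those neighbors, with a careful argument that the two amortization channels --- slack potential and lazy neighbor charging --- do not double-count.
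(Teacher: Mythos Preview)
Your proposal has a genuine gap: the amortization does not close, and the approach is structurally different from what \cite{BCHN18} actually does.

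The core problem is the propagation cost. When you recolor $v$ you spend $\Theta(d(v))$ time updating every neighbor's hash table, and this term can be $\Theta(\Delta)$. Your potential $\Phi = \alpha\sum_v \log_2(\Delta+2-d(v))$ changes by only $O(1)$ per edge update, so it cannot pay for a $\Theta(\Delta)$ event. Your bucketing argument does not fix this: you say that within slack-bucket $L_i$ a vertex receives $\Theta(2^i)$ insertions before leaving the bucket, but that bounds bucket transitions, not recolorings. Nothing in your scheme prevents $v$ from being recolored on \emph{every} insertion while it sits in $L_i$. Even if you add the (unstated but necessary) oblivious-adversary argument that a freshly random color from a palette of size $s\approx 2^i$ survives $\Omega(2^i)$ insertions in expectation, you still pay $\Theta(\Delta)$ propagation per recoloring, hence $\Theta(\Delta/2^i)$ amortized per insertion at slack level $i$; for $i=O(1)$ this is $\Theta(\Delta)$, not $O(\log\Delta)$. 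The ``lazy neighbor-update'' escape hatch you mention at the end is exactly the hard part and you have not supplied it.

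What \cite{BCHN18} actually does (as sketched in Section~\ref{sec:technique} of the paper) is structurally different. They maintain a deterministic hierarchical partition of $V$ into $O(\log\Delta)$ levels satisfying Invariant~\ref{inv:main:old}: a node at level $\ell$ has $\Theta(2^\ell)$ down- and same-level neighbors. A node $v$ only maintains the colors of its \emph{up}-neighbors, so when $v$ is recolored only its \emph{down}-neighbors' data structures need updating, at cost $O(2^{\ell(v)})$ rather than $O(d(v))$. The random color is drawn from a palette of blank-or-unique colors of size $\Omega(2^{\ell(v)})$; if a unique color is chosen, the single conflicting down-neighbor is recursively recolored, yielding a cascade strictly decreasing in level whose total cost is $\sum_{\ell\le\ell(v)} O(2^\ell)=O(2^{\ell(v)})$. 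The randomness then gives $\Omega(2^{\ell(v)})$ expected insertions before the next conflict at $v$, so the recoloring amortizes to $O(1)$; the $O(\log\Delta)$ in Theorem~\ref{th:old} comes entirely from the deterministic cost of maintaining the level invariant, not from the coloring step. Your scheme has no analogue of this level structure, and without it the propagation cost cannot be tamed.
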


Our main result  is summarized in Theorem~\ref{thr:main} below. We design a randomized data structure for $(\Delta+1)$-coloring that has $O(1)$ amortized update time in expectation and  with high probability (for a sufficiently long update sequence). This constitutes a {\em dramatic} improvement over the update time of~\cite{BCHN18} as stated in Theorem~\ref{th:old}. As is the case with most existing data structures that are randomized, both Theorems~\ref{th:old} and~\ref{thr:main} hold only when the adversary deciding the next update is {\em oblivious} to the past random choices made by the data structure. 

\begin{theorem}\label{thr:main}
There is a randomized data structure for maintaining a $(\Delta+1)$-coloring  in a dynamic graph that, given any sequence of $t$ updates, takes total time $O(n\log n+n\Delta+t)$ in expectation and with high probability. The space usage is $O(n\log n+n\Delta+m)$, where $m$ is the maximum number of edges present at any time.
\end{theorem}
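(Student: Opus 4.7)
\textbf{Proof plan for Theorem~\ref{thr:main}.}
The plan is to refine the random-recoloring approach of~\cite{BCHN18} so that the amortized cost per update drops from $O(\log \Delta)$ to $O(1)$, by maintaining per-vertex auxiliary state that avoids scanning a non-trivial fraction of a vertex's neighborhood on each conflict. Concretely, each vertex $v$ will cache a small set of sampled colors together with a certificate that they are currently free at $v$. When an insertion $uv$ yields $\chi(u)=\chi(v)$, we will try to recolor one endpoint using a cached color, spending only $O(1)$ time in the common case. Edge deletions, as already noted in the excerpt, never require any recoloring work and cost $O(1)$.

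The main accounting is in bounding the cost of refreshing these caches. I would partition the update sequence at each vertex $v$ into \emph{epochs} of length $\Theta(\Delta)$ touches of $v$. At the start of an epoch we perform a full refresh, scanning $N(v)$ once in $O(\Delta)$ time and drawing a fresh random color (or a small random sample of colors) for $v$. The key structural lemma to prove is that during an epoch, the invariant ``the cached color is free at $v$'' fails with only constant probability per update, so that the expected number of refreshes at $v$ inside any window of $\Omega(\Delta)$ incident updates is $O(1)$. Summed over all vertices and all epochs this gives total refresh cost $O(n\Delta + t)$, while the initial random coloring and data-structure setup contribute $O(n\log n + n\Delta)$.

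Two subtleties will dominate the technical work. First, \emph{cascading recolorings}: when $v$ is recolored its neighbors' caches can become invalid. I would argue, via a potential function that charges each invalidation to the recoloring that caused it, that the total number of invalidated cache entries over $t$ updates is $O(t)$, so each one can be repaired on demand without asymptotically increasing the cost. Second, because refreshes are $\Theta(1/\Delta)$ as frequent as updates, the same random choices are reused across many operations; here I would use a martingale concentration inequality (Azuma or Freedman) on the sum of indicator variables that mark ``expensive'' events, so that the bound holds with high probability, not just in expectation. Since the adversary is oblivious to the random coins, the events admit the bounded-difference structure these inequalities require.

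The hardest part will be controlling the dependencies between cache failures at $v$ and random choices made at its neighbors. A naive analysis breaks because recoloring a neighbor $u$ of $v$ can turn a cached color at $v$ from free to blocked, and this can chain through long paths of recolorings. I would address this by showing that each recoloring invalidates only $O(1)$ cached entries in expectation, and that the induced ``blocking'' graph of recoloring events is dominated by a sub-critical branching process whose expected total size is $O(1)$. This is what lets the amortized per-update cost be $O(1)$ both in expectation and with high probability over a sufficiently long update sequence, yielding the claimed total time $O(n\log n + n\Delta + t)$ and matching space bound.
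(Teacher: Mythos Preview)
Your proposal takes a fundamentally different route from the paper and, as sketched, has a real gap.

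The paper does not cache colors at all. It maintains a hierarchical partition of $V$ into $O(\log n)$ levels. When a conflict arises at $v$, the algorithm either (i) moves $v$ up to the minimum level $\ell'$ where it has $\Theta(3^{\ell'})$ down-neighbors and picks a uniformly random blank-or-unique color from a palette of size $\Omega(3^{\ell'})$ (possibly creating one further conflict at a strictly lower level), or (ii) moves $v$ to level $-1$ and deterministically picks a blank color. The level structure is precisely what balances cost against randomness: recoloring at level $\ell$ costs $O(3^{\ell})$, the palette has size $\Omega(3^{\ell})$, so the vertex is expected to survive $\Omega(3^{\ell})$ incident insertions before its next conflict, and any cascading chain descends strictly in level with geometrically decreasing cost. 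The amortized analysis then classifies \emph{epochs} (maximal same-color intervals of a vertex) as original, induced, or final, classifies levels accordingly, and bounds the cost at original-heavy levels via a ``pseudo-duration'' coupling; the high-probability statement comes from a direct tail bound on the number of short epochs per level, not from Azuma or Freedman.

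Your scheme lacks the ingredient that makes this balance possible. A cached blank color at $v$ is drawn from the set of colors not used by any current neighbor, and that set can have size as small as $1$ when the degree of $v$ is close to $\Delta$; yet your refresh costs $\Theta(\Delta)$. The ratio of palette size to refresh cost can therefore be $\Theta(1/\Delta)$, and your ``key structural lemma'' that the cache survives $\Theta(\Delta)$ touches with constant probability per update is unsupported in this regime. Moreover, cache invalidations caused by \emph{neighbor recolorings} are driven by the algorithm's own randomness rather than by the adversary, so obliviousness buys you nothing there; your ``sub-critical branching process'' assertion that each recoloring invalidates only $O(1)$ caches in expectation is exactly the statement that needs a mechanism like the level hierarchy to be true, and you have not supplied one. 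In short, the paper's levels are doing essential work---simultaneously bounding the recoloring cost, guaranteeing a proportionally large random palette, and forcing cascades to terminate---that a flat caching scheme does not replace.
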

We remark that, unlike several related results in the literature, our amortized bound holds also with high probability.

\subsection{Our Technique}
\label{sec:technique}

We start with a high level overview of the data structure in~\cite{BCHN18}. Roughly speaking, they maintain a {\em hierarchical partition} of the node-set $V$ into $O(\log \Delta)$ levels. Let $\ell(v) \in \{1, \ldots, \log \Delta\}$ denote the {\em level} of a node $v \in V$. For every edge $uv \in E$, say that $u$ is a {\em same-level-neighbor}, {\em down-neighbor} and  {\em up-neighbor} of $v$ respectively iff $\ell(u) = \ell(v)$, $\ell(u) < \ell(v)$ and $\ell(u) \geq \ell(v)$. The  following invariant is maintained. 

\begin{invariant}
\label{inv:main:old}
Each node $v \in V$ has  $\Omega(2^{\ell(v)})$ down-neighbors and  $O(2^{\ell(v)})$ same-level neighbors.
\end{invariant}

In order to ensure that Invariant~\ref{inv:main:old} holds, the nodes need to keep changing their levels as the input graph keeps getting updated via a sequence of edge insertions/deletions.   It is important to note that the subroutine  in charge of maintaining this invariant is  {\em deterministic} and  has $O(\log \Delta)$ amortized update time. 

The data structure in~\cite{BCHN18} uses a separate (randomized) subroutine to maintain a proper $(\Delta+1)$-coloring in the input graph, on top of the hierarchical partition. To appreciate the main intuition behind this {\em recoloring subroutine}, consider the insertion of an edge $uv$ at some time-step $\tau$, and suppose that both $u$ and $v$ had the same color  just before this insertion.  Pick any arbitrary endpoint $x \in \{u, v\}$. The data structure picks a new color for $x$ as follows. Let $\C_x \subseteq \C$ denote the subset of colors that satisfy the following property at time-step $\tau$: A color $c \in \C$ belongs to $\C_x$ iff either (a) no neighbor of $x$ has color $c$, or (b) no up-neighbor of $x$ has color $c$ and exactly one down-neighbor of $x$ has color $c$. Since the node $x$ has at most $\Delta$ neighbors and the palette $\C$ consists of $\Delta+1$ colors, a simple counting argument (see the proof of Lemma~\ref{lem:inv:recolor}) along with Invariant~\ref{inv:main:old} implies that the size of the set $\C_x$ is at least $\Omega(2^{\ell(x)})$.   Furthermore, using appropriate data structures, the set $\C_x$ can be computed in time proportional to the number of down-neighbors and same-level neighbors of $x$, which is at most $O(2^{\ell(x)})$ by Invariant~\ref{inv:main:old}.
The data structure picks a color $c'$ uniformly at random from the set $\C_x$, and then recolors $x$ by assigning it the color $c'$. By definition of the set $\C_x$, at most  one neighbor (say, $y$) of $x$ has the color $c'$, and, furthermore, if such a neighbor $y$ exists then $\ell(y) < \ell(x)$. If the down-neighbor $y$ exists, then we recursively recolor $y$ in the same manner. Note that this entire procedure leads to a {\em chain} of recolorings. However, the levels of the nodes involved in these successive recolorings form a strictly decreasing sequence. Thus, the total time taken by the subroutine to handle the edge insertion is at most $\sum_{\ell=1}^{\ell(x)} O(2^{\ell}) = O(2^{\ell(x)})$. 

Now comes the most crucial observation. Note that each time the data structure recolors a node $x$, it picks a new color uniformly at random from a set of size $\Omega(2^{\ell(x)})$. Thus, intuitively, if the adversary deciding the update sequence is oblivious to the random choices made by the data structure, then in expectation at least $\Omega(2^{\ell(x)}/2) = \Omega(2^{\ell(x)})$ edge insertions incident on $x$ should take place before we encounter a {\em bad event} (where the other endpoint of the edge being inserted has the same color as $x$). The discussion in the preceding paragraph implies that we need $O(2^{\ell(x)})$ time to handle the bad event. Thus, overall we get an amortized update time of $O(1)$ in expectation. 

\smallskip
\noindent {\bf Our contribution:} To summarize, the data structure in~\cite{BCHN18} has two components -- (1) a deterministic subroutine for maintaining the hierarchical partition which takes $O(\log \Delta)$ amortized update time, and (2) a randomized subroutine for maintaining a proper $(\Delta+1)$-coloring  which takes $O(1)$ amortized update time. The analysis of the amortized update time of the first subroutine is done via an intricate potential function, and it is not clear if it is possible to improve the update time of this subroutine to $O(1)$. 

In order to get an overall update time of $O(1)$, our data structure merges these two components together in a very careful manner. Our starting point is to build on the high-level strategy used for maximal matching in~\cite{S16focs}, but due to the differences between the two problems our argument deviates from that of~\cite{S16focs} significantly in several crucial and highly nontrivial points.
 Suppose that we decide to recolor a node $x$ during the course of our data structure (either due to the insertion of an edge incident on it, or because one of its up-neighbors took up the same color as $x$ while recoloring itself). Let $\ell(x)$ be the current level of $x$. We first check if the number of down-neighbors of $x$ is $\Omega(3^{\ell(x)})$. If the answer is yes, then we move up the node $x$ to the minimum level $\ell'(x) > \ell(x)$ where the number of its down-neighbors becomes $\Theta(3^{\ell'(x)})$, following which we recolor the node $x$ in the same manner as  in~\cite{BCHN18}. In contrast, if the answer is no, then we find a new color for $x$ that does not conflict with any of its neighbors and move the node $x$ down to the smallest level in the hierarchical partition. Thus, in our data structure, the hierarchical partition itself is determined by the random choices made by the nodes while they recolor themselves. This  makes the analysis of our randomized data structure significantly more challenging, and Invariant~\ref{inv:main:old} is no longer satisfied all the time. Nevertheless, we manage to show that our new data structure has $O(1)$ amortized update time not only in expectation, but also with high probability.
 
 \medskip
\noindent {\bf Independent work:} Independently of our work, Henzinger and Peng~\cite{HenzingerP19} have obtained a data structure for $(\Delta+1)$-vertex coloring with $O(1)$ \emph{expected amortized update time}. Note that our work achieves $(\Delta + 1)$-vertex coloring with $O(1)$ amortized update time \emph{with high probability}.
 
\subsection{Perspective}

$(\Delta+1)$-Vertex coloring of a graph is a {\em local problem} where we are asked to assign a {\em state}  to every node/edge, subject to a {\em local constraint} at each node/edge that specifies the admissible assignments of states  in its immediate neighborhood. For instance, in a $(\Delta+1)$-vertex coloring the state of a node corresponds to its color, and the local constraint at a node requires that none of its neighbors gets the same color as the node itself. {\em Maximal matching}, {\em maximal independent set} (MIS) and {\em $(2\Delta-1)$-edge coloring} are some of the other well-known problems that fall within this category. Interestingly, all these problem admit the same type of  greedy algorithm in the static setting: Scan  the nodes/edges in any arbitrary order. While considering a given node/edge during this scan, assign a state to it depending on the states already assigned to its preceding neighbors. This greedy algorithm runs is linear time. Since this is the best possible running time in the static setting,  it is natural to ask if  we can match the performance of this greedy algorithm when the input graph changes dynamically, which leads us to the question of designing data structures for all these local problems in polylogarithmic (and, in an ideal scenario -- constant) update time. This research agenda has received significant attention in the dynamic algorithms community in recent years.
Baswana et al.~\cite{BGS15} gave a randomized data structure for dynamic maximal matching with $O(\log n)$ amortized update time, which was later improved to $O(1)$ by Solomon~\cite{S16focs}. For $(2\Delta-1)$-edge coloring, Bhattacharya et al.~\cite{BCHN18} gave a deterministic data structure with $O(\log \Delta)$ worst case update time (see~\cite{DHZ19} for another related result). Finally, there has been a spate of results on dynamic MIS~\cite{AssadiOSS18,OnakSSW18,AssadiOSS19}, culminating in two very recent papers~\cite{MISfocs1,MISfocs2} that achieve polylogarithmic update time.

To appreciate why getting $O(1)$ update times for natural dynamic problems is an important research agenda, note that there is a very influential series of results on  giving {\em cell-probe} lower bounds for dynamic problems. For quite a few well-known dynamic problems, these lower bounds rule out the possibility of getting $o(\log n)$ update time~\cite{LarsenWY18,Larsen12,PatrascuD06}. To take a concrete example, consider the dynamic connectivity problem, where the input graph changes via a sequence of edge insertions/deletions, and we have to answer queries of the form $Q(u, v)$ which asks whether or not the two nodes $u$ and $v$ belong to the same connected component. Patrascu and Demain~\cite{PatrascuD06} showed that any dynamic connectivity data structure must either have an amortized update time of $\Omega(\log n)$ or a query time of $\Omega(\log n)$. In contrast, by getting an $O(1)$ update time data structure, we rule out the possibility of such a lower bound for dynamic $(\Delta+1)$-coloring. 

To summarize,  our result fits nicely within a  line of work  where the focus is to design data structures with $O(1)$ update times. We conclude by mentioning a couple of  recent results on this topic. For every fixed $\epsilon > 0$, it was shown that one can deterministically maintain in $O(1)$ amortized update times a $(2+\epsilon)$-approximate minimum vertex cover in a fully dynamic graph~\cite{BhattacharyaK19}, and  a $(1+\epsilon)$-approximate maximum matching in a graph undergoing a sequence of edge insertions in $O(1)$ update time~\cite{GLSSS19soda}.

\section{Our Algorithm}\label{sec:algo}

Consider a graph $G = (V, E)$ with $|V| = n$ nodes that is changing via a sequence of {\em updates} (edge insertions and deletions). Let $\Delta > 0$ be a fixed integer such that the maximum degree of any node in the dynamic graph $G$ is always upper bounded by $\Delta$. Let $\C = \{1, \ldots, \Delta+1\}$ denote a palette of $\Delta+1$ colors. Our algorithm will maintain a proper $\Delta+1$-coloring $\chi : V \rightarrow \C$ in the dynamic graph $G$.

\medskip
\noindent {\bf A hierarchical partition of the node-set $V$:} Fix a parameter $L = \lceil \log_3 (n-1) \rceil - 1$. Our dynamic algorithm will maintain a hierarchical partition of the node-set $V$ into $L+2$ distinct {\em levels} $\{-1, 0, \ldots, L\}$. We let $\ell(v) \in \{-1, 0, \ldots, L\}$ denote the level of a given node $v \in V$. The levels of the nodes will vary over time. Consider any edge $uv \in E$ in the dynamic graph $G$ at any given point in time: We say that $u$ is an {\em up-neighbor} of $v$ iff $\ell(u) \geq \ell(v)$, and a {\em down-neighbor} of $v$ iff $\ell(u) < \ell(v)$.

\medskip
\noindent {\bf Notations:} We now introduce a few important notations. Fix any node $v \in V$. Let $\N_v = \{ u \in V : uv \in E \}$ denote the set of neighbors of $v$. Furthermore, let $\C^+_v = \{ c \in \C : c = \chi(u) \text{ for some } u \in \N_v \text{ with } \ell(u) \geq \ell(v)\}$ denote the set of colors assigned to the up-neighbors of $v$. We say that $c \in \C$ is a {\em blank} color for $v$ iff no neighbor  of $v$ currently has the color $c$. Similarly, we say that $c \in \C$ is a {\em unique} color for $v$ iff $c \notin \C^+_v$ {\em and} exactly one down-neighbor of $v$ currently has the color $c$. Finally, for every  $k \in \{-1, \ldots, L\}$, we let $\phi_v(\ell) = \left| \{ u \in \N_v : \ell(u) < \ell \} \right|$ denote the number of neighbors of $v$ that currently lie below level $k$.

\medskip
We are now ready to describe our dynamic algorithm.

\medskip
\noindent {\bf Preprocessing:} In the beginning, the input graph $G = (V, E)$ has an empty edge-set, i.e., $E = \emptyset$, and the algorithm starts with any arbitrary coloring $\chi : V \rightarrow \C$. All the relevant data structures are initialized. Subsequently, the algorithm handles the sequence of updates to the input graph in the following manner.

\medskip
\noindent {\bf Handling the deletion of an edge:} Suppose that an edge $uv$ gets deleted from $G$. Just before this deletion, the coloring $\chi : V \rightarrow \C$ maintained by the algorithm was proper (no two adjacent nodes had the same color). So the coloring $\chi$ continues to remain proper even after the deletion of the edge $uv$. Accordingly, the deletion of an edge does {\em not} lead to any change in the levels of the nodes and the coloring maintained by the algorithm.

\medskip
\noindent {\bf Handling the insertion of an edge:} This procedure is described in Figure \ref{fig:insertion}. Suppose that an edge $uv$ gets inserted into $G$. 
If, just before this insertion, we had $\chi(u) \neq \chi(v)$, then we call this insertion \emph{conflict-less}, and otherwise \emph{conflicting}. In case of a conflict-less insertion, the coloring $\chi$ continues to remain proper even after insertion of the edge $uv$. In this case, the edge-insertion does {\em not} lead to any change in the levels of the nodes or the colors assigned to them.  Otherwise, we pick the endpoint $x \in \{u, v\}$ that was recolored last and call the subroutine $\recolor(x)$. 
 This call to $\recolor(x)$ changes the color assigned to $x$ and it might also change the level of $x$. However, there is a possibility that the new color assigned to $x$ might be the same as the color of (at most one) down-neighbor of $x$. If this happens to be the case, then we go to that neighbor of $x$ it conflicts with, and keep repeating the same process until we end up with a proper coloring in the input graph $G$.

Procedure $\recolor(x)$ (see Figure \ref{fig:recolor}), depending on whether $\phi_x(\ell(x)+1) < 3^{\ell(x)+2}$ or not, calls one of the procedures $\detc(x)$ and $\randc(x)$ which are described next.

 \medskip
 \noindent {\bf $\detc(x)$:} This subroutine first picks a {\em blank} color (say) $c$ for the node $x$. Recall that by definition no neighbor of $x$ has the color $c$. It now recolors the node $x$ by setting $\chi(x) \leftarrow c$. Finally, it moves the node $x$ down to level $-1$, by setting $\ell(x) \leftarrow -1$. It then updates all the relevant data structures.

  \medskip
 \noindent {\bf $\randc(x)$:} This subroutine works as follows. Let $\ell = \ell(x)$ be the level of the node $x$ when this subroutine is called. Step 04 in Figure~\ref{fig:recolor} implies that at that time we have $\phi_x(\ell+1) \geq 3^{\ell+2}$. It identifies the {\em minimum} level $\ell' > \ell$ where $\phi_x(\ell'+1) < 3^{\ell'+2}$. Such a level $\ell'$ must exist because  $\phi_x(L+1) \leq (n-1) < 3^{L+2}$. The subroutine then moves the node $x$ up to level $\ell'$, by setting $\ell(x) \leftarrow \ell'$, and updates all the relevant data structures. After this step, the subroutine computes the set $\C^*_x \subseteq \C$ of colors that are either blank or unique for $x$, next called \emph{palette}. It picks a color $c \in \C^*_x$ uniformly at random, and recolors the node $x$ with color $c$, by setting $\chi(x) \leftarrow c$. It then updates all the relevant data structures. If $c$ happens to be a blank color for $x$, then no neighbor of $x$ has the same color as $c$. In other words, this recoloring of $x$ does {\em not} lead to any new conflict. Accordingly, in this case the subroutine returns NULL. Otherwise, if $c$ happens to be a blank color for $x$, then by definition exactly one down\fab{-}neighbor (and zero up\fab{-}neighbors) of $x$ also has color $c$. Let this down\fab{-}neighbor be $y$. In other words, the recoloring of $x$ creates a new {\em conflict} along the edge $(x, y)$, and we need to recolor $y$ to ensure a proper coloring. Thus, in this case the subroutine returns the node $y$.

 \begin{figure}[htbp]
                                                \centerline{\framebox{
                                                                \begin{minipage}{5.5in}
                                                                        \begin{tabbing}                                                                            
                                                                                01. \ \ \=  {\sc If} $\chi(u) = \chi(v)$, {\sc Then} \\
                                                                                02. \> \ \ \ \ \ \ \ \ \= Let $x \in \{u, v\}$ be the endpoint that was recolored last. \\
                                                                                03. \> \> {\sc While} $x \neq \text{NULL}$: \\            
                                                                                04. \> \> \ \ \ \ \ \  \ \ \ \ \= $x \leftarrow \recolor(x)$.                                                                                                                 
                                                                        \end{tabbing}
                                                                \end{minipage}
                                                        }}
                                                        \caption{\label{fig:insertion} Handling the insertion of an edge $uv$.}
                                                \end{figure}

\begin{figure}[htbp]
                                                \centerline{\framebox{
                                                                \begin{minipage}{5.5in}
                                                                        \begin{tabbing}                                                                            
                                                                                01. \ \ \=  {\sc If} $\phi_x(\ell(x)+1) < 3^{\ell(x)+2}$, {\sc Then} \\
                                                                                02. \> \ \ \ \ \ \ \ \ \= $\detc(x)$. \\
                                                                                03. \> \> {\sc Return} NULL. \\
                                                                                04. \> {\sc Else} : \\            
                                                                                05. \> \> \ \ \ \ \ \  \ \ \ \ \= $y \leftarrow \randc(x)$.   \\
                                                                                06. \> \> \> {\sc Return} $y$.                                                                                                              
                                                                        \end{tabbing}
                                                                \end{minipage}
                                                        }}
                                                        \caption{\label{fig:recolor} $\recolor(x)$.}
                                                \end{figure}

It is not hard to implement the above data structure such that the following result holds (more details in appendix).
\begin{lemma}\label{lem:runningTime}
There is an implementation of the above data structure such that:
\begin{enumerate}\itemsep0pt
\item The preprocessing time is $O(n\log n+\Delta n)$;\label{lem:runningTime:prep}
\item The space usage is $O(n\log n+\Delta n+m)$, where $m$ is the maximum number of edges present at any time;\label{lem:runningTime:space}
\item Each deletion and conflict-less insertion takes $O(1)$ time deterministically;\label{lem:runningTime:easy}
\item Procedure $\detc(x)$ takes time $O(3^{\ell(x)})$;\label{lem:runningTime:det}
\item Procedure $\randc(x)$ takes time $O(3^{\ell'(x)})$ where $\ell'(x)>\ell(x)$ is the new level of node $x$ at the end of the procedure.\label{lem:runningTime:rand} 
\end{enumerate}
\end{lemma}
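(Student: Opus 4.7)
The plan is to describe an explicit implementation of the data structures supporting the algorithm and then read off each of the five parts from simple bookkeeping arguments. For every node $v\in V$ I will maintain: (i) the color $\chi(v)$ and the level $\ell(v)$; (ii) for each level $k\in\{-1,\dots,L\}$, a doubly linked list $D_v[k]$ of the neighbors of $v$ at level $k$ together with a separately stored size $|D_v[k]|$, so that $\ph{x}(\ell)=\sum_{k<\ell}|D_x[k]|$ can be obtained by scanning upwards in time proportional to the number of levels visited; (iii) for every color $c\in\C$, a count $\mu^\uparrow(v,c)$ of the up-neighbors of $v$ colored $c$, maintained together with a doubly linked list of those neighbors; (iv) the set $\cp{v}$ stored both as a bitmap of size $\Delta+1$ (for $O(1)$ membership tests) and as a doubly linked list (for $O(|\cp{v}|)$ iteration); and (v) cross-pointers between every adjacency record and its entries in the per-level and per-color lists, so that every individual list mutation costs $O(1)$. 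The per-node overhead is then $O(L+\Delta)=O(\log n+\Delta)$, plus $O(1)$ per edge, which gives the space and preprocessing bounds of parts~\ref{lem:runningTime:prep} and~\ref{lem:runningTime:space}.

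Part~\ref{lem:runningTime:easy} is straightforward: a deletion or a conflict-less insertion of an edge $uv$ touches exactly one position in $\N_u,\N_v$, one position in each of $D_u[\ell(v)], D_v[\ell(u)]$, changes each of $\mu^\uparrow(u,\chi(v))$ and $\mu^\uparrow(v,\chi(u))$ by at most one, and causes at most one $0\leftrightarrow 1$ transition in each of $\cp{u},\cp{v}$; every list operation is $O(1)$ through the cross-pointers.

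For parts~\ref{lem:runningTime:det} and~\ref{lem:runningTime:rand}, the guard of $\recolor$ guarantees $\ph{x}(\ell(x)+1)<3^{\ell(x)+2}$ at the moment $\detc(x)$ is invoked, and the minimality of the target level in $\randc$ guarantees $\ph{x}(\ell'+1)<3^{\ell'+2}$. In $\detc(x)$ I would iterate over the at most $\ph{x}(\ell(x)+1)$ same-level and down-neighbors of $x$, marking the colors they use in a reusable bitmap (cleared lazily by a timestamp trick so that each call's resetting cost is proportional to the entries actually marked), combine with the maintained bitmap of $\cp{x}$ to read off a blank color $c$ (which exists since $|\C|>\Delta$), set $\chi(x)\leftarrow c$ and $\ell(x)\leftarrow -1$, and propagate these two changes by walking once more through the same low-level neighbors and updating each $D_v[\cdot]$, $\mu^\uparrow(v,\cdot)$, and $\cp{v}$ entry in $O(1)$. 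Procedure $\randc(x)$ is handled analogously: scan $|D_x[\cdot]|$ upwards from $\ell(x)+1$ to locate $\ell'$ in $O(\ell'-\ell(x))$ time, construct the palette $\C^*_x$ by running through the at most $3^{\ell'+2}$ same-level and down-neighbors at the new level combined with $\cp{x}$, draw a uniformly random color from $\C^*_x$ in $O(1)$ using a dynamic-array representation, and propagate to the same low-level neighbors.

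The main obstacle is to justify that ignoring the strict up-neighbors of $x$ during propagation truly preserves all invariants, since every quantity maintained at a strict up-neighbor $v$ of $x$ -- the counters $\mu^\uparrow(v,\cdot)$, the set $\cp{v}$, the per-level lists $D_v[\cdot]$, and the per-color lists -- must be shown to be independent of $\chi(x)$ and of whether $\ell(x)$ takes its old or its new value. Since in $\detc$ the level of $x$ only decreases (to $-1$, still strictly below $\ell(v)$) and in $\randc$ it only rises to some $\ell'<\ell(v)$, $x$ remains a strict down-neighbor of $v$ throughout, and none of the up-side quantities at $v$ depends on $\chi(x)$; the only datum logically affected at $v$ is the implicit count of down-neighbors per color, which we never store explicitly and which will be recomputed on the fly the next time $v$ is itself recolored by the same sweep of its same-level and down-neighbors. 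Once this lazy scheme is verified formally (the appendix supplies the straightforward case analysis), the bounds in parts~\ref{lem:runningTime:det} and~\ref{lem:runningTime:rand} follow by multiplying the number of neighbors of $x$ that are actually touched ($<3^{\ell(x)+2}$ or $<3^{\ell'+2}$, respectively) by the $O(1)$ cost of each cross-pointer-based mutation.
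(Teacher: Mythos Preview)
Your overall plan is close to the paper's, but there is a genuine gap in the data-structure design. You store, for every node $v$ and every level $k$, a list $D_v[k]$ of the neighbors of $v$ at level $k$. You then claim that for a strict up-neighbor $v$ of $x$ the lists $D_v[\cdot]$ are ``independent of $\chi(x)$ and of whether $\ell(x)$ takes its old or its new value.'' Your justification only establishes independence from $\chi(x)$; the claim for $\ell(x)$ is simply false. If $x$ moves from level $\ell_{\mathrm{old}}$ to $\ell_{\mathrm{new}}$ while staying strictly below $\ell(v)$, then $x$ must migrate from $D_v[\ell_{\mathrm{old}}]$ to $D_v[\ell_{\mathrm{new}}]$. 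Leaving this stale means that the next time $v$ itself is recolored, both your computation of $\phi_v(\cdot)=\sum_{k<\ell}|D_v[k]|$ and your sweep over ``same-level and down-neighbors'' will be based on corrupted data. Updating $D_v[\cdot]$ eagerly for \emph{all} neighbors of $x$, on the other hand, can cost $\Omega(\Delta)$ per recoloring, destroying the $O(3^{\ell(x)})$ and $O(3^{\ell'(x)})$ bounds in parts~\ref{lem:runningTime:det} and~\ref{lem:runningTime:rand}.

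The paper circumvents this with an \emph{asymmetric} design: each $v$ stores its up-neighbors in a level-indexed structure $\up{v}[\cdot]$ but its down-neighbors in a single flat list $\down{v}$ with no level information. Then a strict up-neighbor $v$ of $x$ truly has nothing depending on $\ell(x)$, and only nodes $w$ for which $x$ is (or becomes) an up-neighbor need touching; there are at most $\phi_x(\ell'+1)<3^{\ell'+2}$ of those. The quantities $\phi_v(\ell)$ that the algorithm actually queries all satisfy $\ell>\ell(v)$ and are recoverable as $|\down{v}|+\sum_{\ell(v)\le k<\ell}|\up{v}[k]|$, so the fine-grained level of down-neighbors is never needed. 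Once you adopt this asymmetric layout, the rest of your argument (the pointer bookkeeping, the timestamped bitmap for blank colors, the uniform sampling from $\C^*_x$) goes through essentially as you wrote it and matches the paper's implementation.
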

The $O(\Delta n)$ term in the preprocessing time and space usage is obtained by storing informations about the colors of certain neighbors of each node $v$ in arrays of size $\Delta+1$. With a careful (but standard) use of dynamic hash functions, we can make the size of each such array proportional to the number of neighbors of $v$, while keeping constant amortized access time. This way we can remove the term $O(\Delta n)$ without increasing asymptotically the final amortized cost of our data structure.

\section{Analysis}\label{sec:mainanal} 

In this section, we analyze the runtime of our edge update algorithm. We assume that our graph is empty at the end, meaning no edges exist on the graph after we perform all the updates in our update sequence. To ensure we end with an empty graph, we append additional edge deletions at the end of the original update sequence. Since we begin with an empty graph, this at most doubles the number of updates in our update sequence, but simplifies our analysis. Because edge deletions will never cause a recoloring of any vertex and the number of updates increases by at most a factor of $2$, an amortized runtime bound of our algorithm with respect to the new update sequence will imply the same (up to a factor of 2) amortized bound with respect to the original sequence. 

It is not hard to show that our data structure maintains the following invariant.
\begin{invariant}\label{main:inv:recolor} 
Consider a vertex $v$ at level $\ell(v)\geq 0$ \fab{at a given point of time $\tau$. When $v$ was recolored last prior to $\tau$, it chose a color uniformly at random from a} palette of size at least $3^{\ell(v) + 1}/2 + 1$. Furthermore, at that time $v$ has at least $3^{\ell(v)+1}$ down-neighbors. For $\ell(v)=-1$, the color of $v$ is set deterministically.
\end{invariant}
\begin{lemma}
\label{lem:inv:recolor}
Invariant \ref{main:inv:recolor} holds for all vertices at the beginning of each update.\end{lemma}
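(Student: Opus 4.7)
My plan is induction on the number of updates processed so far. The invariant holds trivially during preprocessing, because every vertex starts at level $-1$ (so the clause ``$\ell(v)\geq 0$'' is vacuous) and no vertex has yet been recolored. Deletions and conflict-less insertions do not trigger a recoloring, do not change any level, and do not invalidate any past recoloring event, so the invariant is preserved automatically. Hence the only interesting case is when some call $\recolor(x)$ is executed while processing a conflicting insertion. If it dispatches to $\detc(x)$, then $x$ drops to level $-1$ and there is nothing further to check. Otherwise it dispatches to $\randc(x)$, which moves $x$ to a new level $\ell'>\ell(x)$ and samples uniformly from $\C^*_x$, and I need to verify both clauses of the invariant at that new level, at exactly the moment of sampling.

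The first step I would carry out is to pin down the down-neighbor count at $\ell'$. Recall that $\randc$ is entered precisely when $\phi_x(\ell+1)\geq 3^{\ell+2}$, with $\ell=\ell(x)$, and $\ell'$ is defined as the smallest integer greater than $\ell$ with $\phi_x(\ell'+1)<3^{\ell'+2}$. A two-case split settles the matter: in the case $\ell'=\ell+1$ the branching condition of $\recolor$ itself gives $\phi_x(\ell')=\phi_x(\ell+1)\geq 3^{\ell+2}=3^{\ell'+1}$; in the case $\ell'>\ell+1$, the level $\ell'-1$ lies strictly between $\ell$ and $\ell'$ and so violates the stopping criterion, again giving $\phi_x(\ell')\geq 3^{\ell'+1}$. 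Either way $x$ has at least $3^{\ell'+1}$ down-neighbors at the moment of sampling.

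Next I would establish the palette lower bound. Letting $a$ and $d$ denote the number of up- and down-neighbors of $x$ at level $\ell'$, I note that $a+d\leq\Delta$ and that every color in $\C\setminus\C^*_x$ is either used by some up-neighbor of $x$ (contributing at most $a$ colors) or used by at least two down-neighbors while absent from the up-neighbors (contributing at most $\lfloor d/2\rfloor$ colors by pigeonhole, since each such color consumes two distinct down-neighbor ``slots''). Combining the two bounds yields
\[
|\C^*_x|\;\geq\;(\Delta+1)-a-\tfrac{d}{2}\;\geq\;(\Delta+1)-(\Delta-d)-\tfrac{d}{2}\;=\;1+\tfrac{d}{2}\;\geq\;1+\tfrac{3^{\ell'+1}}{2},
\]
which is exactly the required palette size. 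Finally, because a vertex's level only changes inside $\detc$ or $\randc$, between successive recolorings of $v$ the current $\ell(v)$ equals the level at the latest recoloring, so the two bounds just derived transport unchanged to any later query time $\tau$.

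The step I expect to need the most care is the case analysis around the definition of $\ell'$, since it has to be coupled precisely with the branching guarantee ``$\phi_x(\ell+1)\geq 3^{\ell+2}$'' from step~04 of $\recolor$ at the boundary $\ell'=\ell+1$; the palette counting, once the down-neighbor lower bound is in place, is routine, as is the corner case of the minimum-level search (which always terminates because $\phi_x(L+1)\leq n-1<3^{L+2}$).
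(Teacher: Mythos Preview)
Your proposal is correct and follows essentially the same approach as the paper's proof: induction over updates, with the $\detc$ case trivial and the $\randc$ case handled by first lower-bounding $\phi_x(\ell')\geq 3^{\ell'+1}$ from the minimality of $\ell'$, then deriving $|\C^*_x|\geq d/2+1$ by the same counting of up-neighbor colors versus colors shared by at least two down-neighbors. Your explicit two-case split for $\ell'=\ell+1$ versus $\ell'>\ell+1$ and your phrasing of the palette count are slightly more detailed than the paper's presentation, but the argument is the same.
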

\begin{proof}
During the preprocessing step the color of each node $v$ is set deterministically to some arbitrary color and $\ell(v)=-1$. Hence the claim holds initially. The color of $v$ changes only due to a call to $\recolor(v)$. Let $\ell(v)$ and $\ell'(v)$ denote the level of $v$ at the beginning and end of this call. 
If $\recolor(v)$ calls $\detc(v)$, the color of $v$ is set determinstically and $\ell'(v)=-1$. Hence the invariant holds. Otherwise, $\recolor(v)$ invokes $\randc(v)$. 
The latter procedure sets $\ell'(v)$ to the smallest value (larger than $\ell(v)$) such that $\phi_v(\ell'(v)+1) < 3^{\ell'(v)+2}$. 
Recall that $\phi_v(\ell)$ is the number of neighbors of $v$ of level smaller than $\ell$. This implies that the number of down-neighbors of $v$ (at level $\ell'(v)$) are  $\phi:=\phi_v(\ell'(v))\geq 3^{\ell'(v)+1}$. 

It is then sufficient to argue that the palette used by $\randc(v)$ has size at least $\phi/2+1$. To this aim we use exactly the same argument as in \cite{BCHN18} (that we reproduce here for the sake of completeness). One has $|\colorarr{v}| = (\Delta + 1) - |\cp{v}| \geq (\Delta + 1) - (\Delta - \phi)=\phi+1$, where equality holds when up-neighbors of $v$ all have distinct colors. 
For any color in $c \in \colorarr{v}$ that is occupied by at most one down-neighbor of $v$, $c$ is a blank or unique color. Let $x$ be the number of down-neighbors of $v$ that occupy a unique color. Then, the size of $v$'s palette is at least $1 + |\phi| - (\phi - x)/2 \geq \phi/2+1$.
\end{proof}

Let $t$ be the total number of updates. Excluding the preprocessing time, the running time of our data structure is given by the cost of handling insertions and deletions. By Lemma \ref{lem:runningTime}-\ref{lem:runningTime:easy}, the total cost of deletions and insertions that do not cause conflicts is $O(t)$. We can therefore focus on insertions that cause conflicts. Modulo constant factors, the total cost of the latter insertions is bounded by the total cost of the calls to $\recolor(\cdot)$ that they  induce. In order to bound this cost, we introduce the notion of epoch in next section.

\subsection{Epochs}

From the previous discussion, we need to bound the total cost of the calls to $\recolor(\cdot)$. To that aim, and inspired by \cite{BGS15}, we introduce the following notion of epoch. An epoch $\epoch$ is associated with a node $v=v(\epoch)$, and consists of any maximal time interval in which $v$ keeps the same color. So $\epoch$ starts with a call to $\recolor(v)$, and ends immediately before the next call to $\recolor(v)$ is executed. Observe that there are potentially multiple epochs associated with the same node $v$. Notice that by construction, during an epoch $\epoch$ the level and color of $v(\epoch)$ does not change: we refer to that level and color as $\ell(\epoch)$ and $\chi(\epoch)$, resp. By $\epoch_\ell$ we denote the set of epochs at level $\ell$. 
We define the \emph{cost} $c(\epoch)$ of an epoch $\epoch$ as the time spent by the call to $\recolor(v(\epoch))$ that starts it, and then we charge the cost of every epoch $\epoch$ at level $\ell(\epoch) = - 1$ to the previous epoch involving the same node $v(\epoch)$. After implementing this charging scheme, it follows from 
Lemma \ref{lem:runningTime} (points \ref{lem:runningTime:det}-\ref{lem:runningTime:rand}) that the cost of any epoch $\epoch$ is given by $c(\epoch) = O\left(3^{\ell(\epoch)} \right)$.

\begin{lemma}
Excluding the preprocessing time, the total running time of the data structure is given by: $O(\sum_{\ell}\sum_{\epoch\in \epoch_\ell}c(\epoch))=O(\sum_\ell |\epoch_\ell|\cdot 3^{\ell(\epoch)})$. 
\end{lemma}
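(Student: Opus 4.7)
The plan is to decompose the running time (excluding preprocessing) into the $O(t)$ cost of deletions and conflict-less insertions, which by Lemma~\ref{lem:runningTime}-\ref{lem:runningTime:easy} is $O(1)$ per update, plus the cost of the $\recolor(\cdot)$ chains triggered by conflicting insertions. The central observation is that every call to $\recolor(v)$ starts exactly one fresh epoch of $v$, and $c(\epoch)$ is by definition the time of the initiating call. Thus, the total time spent across all $\recolor(\cdot)$ calls equals $\sum_{\epoch} c(\epoch)$ summed \emph{before} the charging step, with each node's preprocessing-induced level-$(-1)$ epoch contributing $0$ since it was not initiated by a $\recolor$ call. The $O(t)$ contribution is then absorbed into the final bound.

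Next I would verify that after the charging scheme described just above the lemma, every epoch has charged cost $O(3^{\ell(\epoch)})$. For $\ell(\epoch)\ge 0$, the initiating call must have invoked $\randc$ (otherwise the resulting level would be $-1$), so Lemma~\ref{lem:runningTime}-\ref{lem:runningTime:rand} gives intrinsic cost $O(3^{\ell(\epoch)})$. If the next epoch of the same node is at level $-1$, then it was initiated by $\detc(v(\epoch))$ fired while $v(\epoch)$ was still at level $\ell(\epoch)$, so by Lemma~\ref{lem:runningTime}-\ref{lem:runningTime:det} the charge transferred back to $\epoch$ is also $O(3^{\ell(\epoch)})$. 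The sum remains $O(3^{\ell(\epoch)})$. Epochs with $\ell(\epoch)=-1$ have their cost redirected to their predecessor and so contribute $0$ themselves, after which summing over all epochs yields $\sum_\epoch c(\epoch)=O\!\left(\sum_{\ell\ge 0}|\epoch_\ell|\cdot 3^\ell\right)=O\!\left(\sum_\ell |\epoch_\ell|\cdot 3^{\ell(\epoch)}\right)$.

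The main subtlety, and the step where I would slow down, is confirming that the charging map is globally consistent: a level-$(-1)$ epoch other than the zero-cost initial one is created only by $\detc$ and hence has a unique strictly-higher-level predecessor of the same node; conversely, each epoch is followed by at most one $\recolor(v(\epoch))$ call, so it absorbs at most one incoming charge. Terminal epochs (never followed by another recolor) simply receive no charge, and initial level-$(-1)$ epochs need not be charged since their cost is zero. Combined with the per-epoch bound above and the $O(t)$ term for the easy updates, this yields the claimed identity.
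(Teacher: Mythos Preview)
Your approach matches the paper's: the lemma is stated there without a separate proof, as an immediate consequence of the definition of $c(\epoch)$, the charging of level-$(-1)$ epochs to their predecessors, and Lemma~\ref{lem:runningTime}. Your elaboration of the charging is correct, with one harmless slip: a level-$(-1)$ epoch need not have a \emph{strictly higher} level predecessor (a node at level $-1$ with $\phi_v(0)<3$ calls $\detc$ and stays at level $-1$), but the charging map is still well-defined since each epoch has at most one predecessor and one successor. The one substantive gap is your claim that the $O(t)$ cost of deletions and conflict-less insertions is ``absorbed into the final bound'': this is unjustified and false in general (if no insertion ever conflicts, the epoch sum is zero while $t$ may be arbitrary). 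The paper does not absorb this term either; it carries the $O(t)$ separately throughout (see Corollary~\ref{cor:charge} and Lemma~\ref{lem:mainHighProbability}), so the lemma is really bounding the recoloring cost, and you should read it that way rather than assert an absorption that does not hold.
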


\medskip
\noindent {\bf A classification of epochs:} For our analysis, it will be convenient to classify epochs as follows. An epoch $\epoch$ is \emph{final} if it is not concluded by a call to $\recolor(v(\epoch))$. In other terms, for an final epoch $\epoch$,  $v(\epoch)$ keeps the same color from the beginning of $\epoch$ till the end of all the updates. Otherwise $\epoch$ is \emph{terminated}. A terminated epoch $\epoch$, $v=v(\epoch)$, terminates for two possible events that happen after its beginning: (1) some edge $uv$ is inserted, with $\chi(u)=\chi(v)$, hence leading to a call to $\recolor(v)$; (2) a call to $\recolor(w)$ for some up-neighbor $w$ of $v$ forces a call to $\recolor(v)$  (without the insertion of any edge incident to $v$). We call the epochs of the first and second type \emph{original} and \emph{induced}, resp. In the second case, we say that the epoch $\epoch'$ that starts with the recoloring of $w$ \emph{induces} $\epoch$. 

We now prove a couple of  lemmas that respectively bound the total cost of the induced and final epochs.
\begin{lemma}\label{lem:inducedToOriginal}
The total cost of induced epochs is (deterministically) at most $O(1)$ times the total cost of original and final epochs.
\end{lemma}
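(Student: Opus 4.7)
My plan is a simple charging argument. Every induced epoch $\tilde\epoch$ is, by definition, terminated by a call $\recolor(w)$ on some up-neighbor $w$ of $v(\tilde\epoch)$. Since the branch $\detc$ assigns a blank color and cannot create any conflict, this call must go through $\randc(w)$. Let $\epoch^*$ denote the new epoch of $w$ started by that call, at level $\ell(\epoch^*)=\ell'(w)$. I will charge the cost of $\tilde\epoch$ to $\epoch^*$, and then show that the total charge is at most $\tfrac{1}{3}(C_{\text{ind}}+C_{\text{orig}}+C_{\text{final}})$; rearranging yields the desired $C_{\text{ind}}=O(C_{\text{orig}}+C_{\text{final}})$.

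First I will verify that the map $\tilde\epoch\mapsto \epoch^*$ is injective and enjoys a strict level gap. Injectivity holds because $\randc(w)$ picks a single color, and by the construction of the palette $\C^*_w$ at most one down-neighbor of $w$ can carry that color, so $\epoch^*$ can induce at most one epoch. The level gap follows from the fact that when the conflict arises, $v(\tilde\epoch)$ is a down-neighbor of $w$ at the \emph{new} level $\ell'(w)$, hence $\ell(\tilde\epoch)<\ell'(w)=\ell(\epoch^*)$, i.e., $\ell(\epoch^*)\ge \ell(\tilde\epoch)+1$.

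Then I will combine these with Lemma~\ref{lem:runningTime}. The upper bound $c(\tilde\epoch)=O(3^{\ell(\tilde\epoch)})$ together with the level gap gives $c(\tilde\epoch)\le O(3^{\ell(\epoch^*)-1})$. For the ratio to the cost of the inducer I will also use the matching lower bound $c(\epoch^*)=\Omega(3^{\ell(\epoch^*)})$, which follows from the choice of $\ell'(w)$ in $\randc$: either $\ell'(w)=\ell(w)+1$ and then $\phi_w(\ell'(w))\ge 3^{\ell(w)+2}=3^{\ell'(w)+1}$ (because $\randc$ was invoked, which requires $\phi_w(\ell(w)+1)\ge 3^{\ell(w)+2}$), or $\ell'(w)>\ell(w)+1$ and then minimality of $\ell'(w)$ forces $\phi_w(\ell'(w))\ge 3^{\ell'(w)+1}$; in either case $\randc(w)$ must inspect $\Omega(3^{\ell'(w)})$ neighbors. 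Summing over induced epochs and using injectivity yields
\[
C_{\text{ind}} \;\le\; \tfrac{1}{3}\sum_{\epoch^* \text{ inducer}} c(\epoch^*) \;\le\; \tfrac{1}{3}\bigl(C_{\text{ind}}+C_{\text{orig}}+C_{\text{final}}\bigr),
\]
as desired. Induced epochs at level $-1$ contribute $0$ to $C_{\text{ind}}$ after the charging of level-$-1$ epochs to their predecessors, and each inducer $\epoch^*$ is automatically at level $\ge 0$ since $\ell(\epoch^*)=\ell'(w)>\ell(w)\ge -1$, so neither case needs special treatment.

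The main obstacle I expect is keeping the hidden constants honest: the level gap is only $1$, so the $1/3$ factor is tight, and naive bookkeeping with the $O(\cdot)$ in $c(\cdot)$ could lose the ratio. To sidestep this I will state the key inequality first in the proxy $3^{\ell(\epoch)}$, obtaining $\sum_{\tilde\epoch \text{ ind}} 3^{\ell(\tilde\epoch)}\le \tfrac{1}{2}\sum_{\epoch\text{ orig or final}} 3^{\ell(\epoch)}$ by the same algebra, and only then translate to $c(\cdot)$ using $c(\epoch)=\Theta(3^{\ell(\epoch)})$, which absorbs all constants into a single $O(1)$ factor.
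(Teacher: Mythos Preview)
Your proposal is correct and follows essentially the same approach as the paper. The paper builds the ``epoch graph'' with an edge from $\epoch$ to its inducer $\epoch'$, observes this is a forest of paths rooted at original/final epochs with strictly increasing levels along each path, and charges every induced epoch to its root, bounding the charge at a root by the geometric sum $\sum_{\ell<\ell(\text{root})}O(3^{\ell})=O(3^{\ell(\text{root})})$. You instead charge each induced epoch one step to its immediate inducer and solve the self-referential inequality $A_{\text{ind}}\le\tfrac{1}{3}(A_{\text{ind}}+A_{\text{orig}}+A_{\text{final}})$ in the proxy $a(\epoch)=3^{\ell(\epoch)}$; both arguments rest on the same two facts (injectivity of ``induces'' and the strict level gap) and are just two ways of summing the same geometric series. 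Your explicit use of $c(\epoch)=\Theta(3^{\ell(\epoch)})$ for $\ell(\epoch)\ge 0$ to translate from the proxy back to costs is something the paper's proof also relies on implicitly when it concludes ``the claim follows'' from the $O(3^{\ell(\text{root})})$ charge bound.
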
 
\begin{proof}
Let us construct a directed \emph{epoch graph}, with node set the set of epochs, and a directed edge $(\epoch,\epoch')$ iff $\epoch'$  induced $\epoch$. Notice that, for any edge $(\epoch,\epoch')$ in the epoch graph, $\ell(epoch')>\ell(epoch)$. Observe also that this graph consists of a collection of directed paths ending at original and final epochs. Let us charge the cost of each induced epoch $\epoch$ to the root $r(\epoch)$ of the corresponding path in the epoch graph. All the cost is charged to original and final epochs, and the cost charged to one epoch $\epoch$ of the latter type is at most $\sum_{\ell< \ell(\epoch)}O(3^{\ell})=O(3^{\ell(\epoch)})$. The claim follows.
\end{proof}

\begin{lemma}\label{lem:final}
Given any sequence of $t$ updates, the total cost of final epochs is (deterministically)  $O(t)$.
\end{lemma}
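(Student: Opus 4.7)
The plan is to exploit the ``empty graph at the end'' assumption stated at the start of Section~\ref{sec:mainanal}, combined with the down-neighbor guarantee of Invariant~\ref{main:inv:recolor}, to charge the cost of each final epoch to distinct edge deletions. Since every node has at most one final epoch, the same deletion will end up charged by at most two final epochs, and this will immediately yield the desired $O(t)$ bound.

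First I would dispense with final epochs at level $-1$: the charging scheme introduced just before the lemma shifts the cost of any level-$(-1)$ epoch to the preceding epoch of the same node, so a level-$(-1)$ final epoch contributes $0$ to the sum we must bound. It therefore suffices to control the sum of $c(\epoch)=O(3^{\ell(\epoch)})$ over all final epochs $\epoch$ with $\ell(\epoch)\geq 0$. Fix such an $\epoch$ and set $v=v(\epoch)$, $\ell=\ell(\epoch)$. Applied at the moment $v$ was last recolored (which is the start of $\epoch$, and which equals $\ell(v)=\ell$ throughout $\epoch$ by definition of an epoch), Invariant~\ref{main:inv:recolor} guarantees that $v$ has at least $3^{\ell+1}$ down-neighbors at that instant. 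Let $S_\epoch$ denote the set of down-edges at $v$ present at the start of $\epoch$, so that $|S_\epoch|\geq 3\cdot 3^{\ell}$. Because the augmented update sequence ends with an empty graph, every edge of $S_\epoch$ must be deleted at some time strictly after the start of $\epoch$. I would then spread the $O(3^{\ell})$ cost of $\epoch$ as an $O(1)$ charge on each edge of $S_\epoch$; this is affordable precisely because $|S_\epoch|\geq 3\cdot 3^{\ell}$.

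Finally, I would bound the total charge received by a single deletion. The deletion of an edge $uv$ is charged by a final epoch $\epoch$ only if $uv\in S_\epoch$, which forces $v(\epoch)\in\{u,v\}$; since each of $u$ and $v$ has at most one final epoch, the deletion of $uv$ is charged by at most two final epochs in total. Summing over deletions, whose number is at most $t$, gives $O(t)$ for the total cost of final epochs. The argument is fully deterministic and uses only Invariant~\ref{main:inv:recolor} together with the emptiness-at-the-end convention, so I do not anticipate any substantive obstacle; the only minor point that warrants care is confirming that level-$(-1)$ final epochs are cleanly absorbed by the existing charging scheme, which is immediate from its definition.
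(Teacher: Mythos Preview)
Your proposal is correct and follows essentially the same approach as the paper's proof: use Invariant~\ref{main:inv:recolor} to guarantee $\geq 3^{\ell+1}$ down-edges at the start of a final epoch, charge the $O(3^{\ell})$ cost to the subsequent deletions of those edges (which exist by the empty-graph-at-the-end convention), and observe that each deletion is charged at most twice. Your treatment is in fact slightly more careful than the paper's in two respects: you explicitly handle level $-1$ final epochs via the prior charging scheme, and you spell out why the ``at most twice'' bound holds (each node has at most one final epoch).
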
 
\begin{proof}
By Invariant \ref{main:inv:recolor}, for any epoch $\epoch$, $v=v(\epoch)$ and $\ell=\ell(\epoch)$, $v$ must have at least $3^{\ell+1}$ down-neighbors at the beginning of $\epoch$. Since by assumption at the end of the process the graph is empty, there must be at least $3^{\ell+1}$ deletions with one endpoint being $v$ during $\epoch$. By charging the $O(3^\ell)$ cost of $\epoch$ to the latter deletions, and considering that each deletion is charged at most twice, we achieve a average cost per deletion in $O(1)$, hence a total cost in $O(t)$. 
\end{proof}

\medskip
\noindent {\bf A classification of levels:} Recall that $\epoch_\ell$ denotes the set of epochs at level $\ell$. 
We now classify the levels into 3 types, as defined below.
\begin{itemize}
\item We say that a level $\ell$ is {\em induced-heavy} iff at least $1/2$-fraction of the epochs in $\epoch_{\ell}$ are induced.
\item We say that a level $\ell$ is {\em final-heavy} iff (a) it is not induced-heavy {\em and} (b) at least $1/8$-fraction of the epochs in $\epoch_{\ell}$ are final. 
\item We say that a level $\ell$ is {\em original-heavy} iff it is neither induced-heavy nor final-heavy. Note that if a level $\ell$ is original-heavy, then at least $3/8$-fraction of the epochs in $\epoch_\ell$ are original. 
\end{itemize} 
Throughout the rest of the paper, we say that an epoch is induced-heavy, final-heavy and original-heavy if it respectively belongs to an induced-heavy, final-heavy and original-heavy level.  Furthermore, we use the term ``{\em cost of a level $\ell$}'' to refer to the total cost of all the epochs at level $\ell$. We next bound the total cost of all the  induced-heavy and final-heavy levels. 

\begin{lemma}\label{lem:lightToHeavy}
The total cost of all the induced-heavy levels is (deterministically) at most $O(1)$ times the total cost of all the original-heavy and final-heavy levels.
\end{lemma}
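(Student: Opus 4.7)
The plan is to push the chain-based charging argument of Lemma~\ref{lem:inducedToOriginal} one step further, but combined with the definition of ``induced-heavy'' so that the charge we send up the chain lands mostly on levels that are \emph{not} induced-heavy. Let me use the shorthand $I_\ell$, $O_\ell$, $F_\ell$ for the number of induced, original and final epochs at level $\ell$, and $T_\ell = I_\ell + O_\ell + F_\ell = |\epoch_\ell|$, so the cost of level $\ell$ is $C_\ell = \Theta(T_\ell\cdot 3^\ell)$ by Lemma~\ref{lem:runningTime}. An induced-heavy level satisfies $I_\ell\geq T_\ell/2$, so $C_\ell \leq O(I_\ell\cdot 3^\ell)$.

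First I would bound $I_\ell$ by a ``root count''. Recall from the proof of Lemma~\ref{lem:inducedToOriginal} that the epoch graph is a collection of vertex-disjoint directed paths whose roots are exactly the original and final epochs, and that levels strictly decrease along every path. Each induced epoch at level $\ell$ therefore lies on a unique chain whose root is an original or final epoch at some strictly larger level $\ell'>\ell$. Since each chain contributes at most one epoch at a given level, I get the clean inequality
\[
I_\ell \;\leq\; \sum_{\ell'>\ell}(O_{\ell'}+F_{\ell'}).
\]

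Summing $C_\ell \leq O(I_\ell\cdot 3^\ell)$ over induced-heavy $\ell$, plugging in the above bound, and swapping the order of summation, the inner sum is geometric and satisfies $\sum_{\ell<\ell'}3^\ell\leq 3^{\ell'}/2$, so
\[
\sum_{\ell\text{ ind-h}} C_\ell \;\leq\; O(1)\sum_{\ell'}(O_{\ell'}+F_{\ell'})\cdot 3^{\ell'}.
\]
Now I would split this right-hand sum by the type of $\ell'$. For $\ell'$ induced-heavy the definition forces $O_{\ell'}+F_{\ell'}\leq T_{\ell'}/2$, so that contribution is at most $O(1)$ times $\sum_{\ell'\text{ ind-h}} C_{\ell'}$ with a coefficient that can be made strictly smaller than $1$ (by tracking constants, say $\tfrac14$ after choosing the constant in the $O(\cdot)$ properly). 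For $\ell'$ final-heavy or original-heavy we simply use $O_{\ell'}+F_{\ell'}\leq T_{\ell'}$, giving $O(C_{\ell'})$.

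The one delicate point — and the main thing that has to be checked carefully — is that the hidden constants in $C_\ell = \Theta(T_\ell\cdot 3^\ell)$ and in the geometric bound $\sum_{\ell<\ell'}3^\ell\leq 3^{\ell'}/2$ combine so that the coefficient in front of $\sum_{\ell\text{ ind-h}}C_\ell$ on the right is strictly less than $1$; this is exactly why the induced-heavy threshold is set to $1/2$ rather than something larger. Once this is verified, I rearrange to absorb the induced-heavy term from the right into the left and conclude
\[
\sum_{\ell\text{ ind-h}} C_\ell \;\leq\; O(1)\sum_{\ell\text{ orig-h or fin-h}} C_\ell,
\]
which is the claim, with the argument deterministic throughout since it uses only the combinatorial structure of the epoch graph and not the randomness.
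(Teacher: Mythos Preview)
Your proposal is correct and rests on the same mechanism as the paper's proof: both use the epoch-graph chains from Lemma~\ref{lem:inducedToOriginal}, where levels strictly increase along each chain, to get the key inequality $I_\ell \leq \sum_{\ell'>\ell}(O_{\ell'}+F_{\ell'})$.

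The difference is in how the bookkeeping is organised. The paper processes levels bottom-up: at each induced-heavy level it pairs every original/final epoch with a distinct induced epoch (possible because $I_\ell \geq T_\ell/2$), then pushes all costs up along their chains, and an induction shows that the charge accumulated at a chain root at level $\ell'$ is $\sum_{\ell<\ell'}O(3^\ell)=O(3^{\ell'})$. You instead do the whole computation globally: bound $C_\ell\leq 2I_\ell\cdot 3^\ell$, substitute the root-count inequality, swap the order of summation, and close the loop with a self-referential absorption inequality. Your route is a bit more algebraic and makes the role of the threshold $1/2$ and the base $3$ more transparent.

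On your ``delicate point'' about the hidden constants in $C_\ell=\Theta(T_\ell\cdot 3^\ell)$: you can avoid the issue entirely by running the argument with the proxy $\widetilde C_\ell:=T_\ell\cdot 3^\ell$ instead of the actual cost (this is what the paper does implicitly, since it only ever uses $c(\epoch)=O(3^{\ell(\epoch)})$). With the proxy the constants are exact: $T_\ell\leq 2I_\ell$, $\sum_{\ell<\ell'}3^\ell< 3^{\ell'}/2$, and $O_{\ell'}+F_{\ell'}\leq T_{\ell'}/2$ at induced-heavy $\ell'$ combine to give an absorption coefficient of $2\cdot\tfrac12\cdot\tfrac12=\tfrac12<1$, yielding $\sum_{\text{ind-h}}\widetilde C_\ell\leq 2\sum_{\text{not ind-h}}\widetilde C_\ell$. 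Since the actual costs are bounded above by the proxy, this suffices for the lemma and for everything downstream. (Incidentally, your computation shows that any threshold strictly above $1/3$ would work; $1/2$ is just a convenient choice.)
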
 
\begin{proof}
We perform the charging level by level, starting from the lowest level $-1$. Given a level $\ell$, if it is either original-heavy or final-heavy then we do nothing. Otherwise, we match each epoch $\epoch\in \epoch_\ell$ that is either original or final with some distinct induced epoch $\epoch'\in \epoch_\ell$. We next charge the cost of $\epoch$ (as obtained from the proof of Lemma \ref{lem:inducedToOriginal}) to $\epoch'$. Finally, we charge the cost of $\epoch'$ to some original or final epoch $\epoch''$ at a higher level following the same scheme as in the proof of Lemma \ref{lem:inducedToOriginal}. Observe that at the end of this process only original and final epochs at the original-heavy and final-heavy levels are charged. Also, an easy induction shows that, when we start processing level $\ell$, the total charge on an original or final epoch at level $\ell$ coming from the lower levels is at most $\sum_{\ell'<\ell}O(3^{\ell'})=O(3^{\ell-1})$. The claim then follows. 
\end{proof}

\begin{lemma}\label{lem:final:new}
Given any sequence of $t$ updates, the total cost of all the final-heavy levels is (deterministically) at most  $O(t)$.
\end{lemma}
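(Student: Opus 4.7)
The plan is to extend the deletion-charging argument of Lemma~\ref{lem:final} from final epochs alone to \emph{all} epochs at a final-heavy level. The key observation is that a final-heavy level $\ell$ satisfies $|\epoch_\ell| \leq 8 f_\ell$ by definition, where $f_\ell$ denotes the number of final epochs at level $\ell$; hence paying for every epoch at such a level costs only a constant factor more than paying for its final epochs alone.

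First, I would bound the cost of any final-heavy level $\ell$. By Lemma~\ref{lem:runningTime}, together with the charging scheme that pushes the cost of every level-$(-1)$ epoch onto the preceding epoch of the same node, every epoch at level $\ell$ has cost $O(3^\ell)$. Combined with the above inequality, the total cost of level $\ell$ is at most $|\epoch_\ell|\cdot O(3^\ell) \leq 8 f_\ell \cdot O(3^\ell) = O(f_\ell\cdot 3^\ell)$. Restricting attention to $\ell \geq 0$ is harmless, since after the charging scheme any direct cost at level $-1$ has been absorbed into an epoch at some higher level.

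Next, I would show $\sum_{\ell\, \text{final-heavy}} f_\ell\cdot 3^\ell = O(t)$, which gives the claim. This follows by rerunning the deletion-charging in the proof of Lemma~\ref{lem:final}: for each final epoch $\epoch$ at level $\ell \geq 0$ with $v = v(\epoch)$, Invariant~\ref{main:inv:recolor} guarantees at least $3^{\ell+1}$ down-neighbors of $v$ at the start of $\epoch$, and since the graph ends empty and $\epoch$ never terminates, each of those incident edges must be deleted during $\epoch$. Distinct final epochs live on distinct nodes (each node hosts at most one final epoch, because a final epoch never ends), and each deletion is incident to at most two nodes, so every deletion is charged at most twice by this scheme. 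Assigning an $O(3^\ell)$ charge to each of the $3^{\ell+1}$ deletions leaves $O(1)$ charge per deletion, hence $O(t)$ in total.

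I do not anticipate a serious obstacle: the $1/8$ slack in the final-heavy definition is precisely what lets the deletion-charging absorb the constant-factor blow-up coming from the non-final epochs at the same level, and no probabilistic machinery is required, which is why the bound is stated deterministically. The only minor bookkeeping is the level-$(-1)$ case, which is already dispatched by the preceding-epoch charging introduced before Lemma~\ref{lem:inducedToOriginal}.
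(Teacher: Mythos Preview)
Your proposal is correct and follows essentially the same approach as the paper. The paper's proof is a two-line invocation of Lemma~\ref{lem:final} (since at a final-heavy level at least a $1/8$-fraction of the epochs are final, the total cost is at most $8$ times the cost of the final epochs, which is $O(t)$); you unpack this by explicitly rerunning the deletion-charging argument, but the underlying idea is identical.
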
 
\begin{proof}
Note that at each final-heavy level at least $1/8$-fraction of the epochs are final. The proof now follows from Lemma~\ref{lem:final}.
\end{proof}

\begin{corollary}\label{cor:charge}
The total cost of the data structure, excluding the preprocessing time and a term $O(t)$, is $O(1)$ times the total cost of the original-heavy levels. 
\end{corollary}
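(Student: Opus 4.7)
The plan is to combine the accounting lemmas already established, as the corollary amounts to a clean aggregation rather than any new argument. First, I would decompose the total running time (excluding preprocessing) into two parts: (i) the cost of handling edge deletions and conflict-less edge insertions, and (ii) the total cost of all the calls to $\recolor(\cdot)$ triggered by conflicting insertions. By Lemma~\ref{lem:runningTime}, part (i) is $O(t)$ deterministically. After the charging scheme introduced just before Lemma~\ref{lem:inducedToOriginal} (which absorbs the $O(1)$-cost epochs at level $-1$ into the preceding epoch of the same node), part (ii) equals $\sum_{\ell \geq 0}\sum_{\epoch \in \epoch_\ell} c(\epoch)$, i.e., the sum of the costs of all the levels.

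Second, I would split this level-indexed sum according to whether each level is induced-heavy, final-heavy, or original-heavy, and apply the two upper bounds that have already been proved for the first two classes. Specifically, Lemma~\ref{lem:lightToHeavy} bounds the total cost of the induced-heavy levels by $O(1)$ times the combined cost of the original-heavy and final-heavy levels, and Lemma~\ref{lem:final:new} bounds the total cost of the final-heavy levels by $O(t)$. Plugging these two inequalities into the three-way split gives that part (ii) is at most
\[
O\bigl(\text{cost of original-heavy levels} + \text{cost of final-heavy levels}\bigr) + O(t) + \text{cost of original-heavy levels} = O(t) + O\bigl(\text{cost of original-heavy levels}\bigr).
\]
Adding the $O(t)$ contribution from part (i) and setting aside preprocessing yields exactly the statement of the corollary.

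The main ``obstacle'' here is essentially bookkeeping: all the substantive content---charging induced epochs upward along the epoch graph (Lemma~\ref{lem:inducedToOriginal}), using the $\Omega(3^{\ell+1})$ down-neighbor guarantee from Invariant~\ref{main:inv:recolor} to amortize final epochs against future deletions (Lemmas~\ref{lem:final} and~\ref{lem:final:new}), and the two-step charging that pushes induced-heavy levels onto original-heavy and final-heavy ones (Lemma~\ref{lem:lightToHeavy})---has already been carried out. The only care needed is to make sure we do not double-count: the bound of Lemma~\ref{lem:lightToHeavy} already incorporates the induced-to-original charging of Lemma~\ref{lem:inducedToOriginal}, so after invoking it we are left with a sum involving only the two heavy classes, at which point Lemma~\ref{lem:final:new} finishes the job.
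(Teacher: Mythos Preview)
Your proposal is correct and follows essentially the same approach as the paper: the paper's proof simply says ``It follows from the above discussion and Lemmas~\ref{lem:lightToHeavy},~\ref{lem:final:new},'' and your write-up just spells out that derivation in more detail (decomposing into deletions/conflict-less insertions versus epoch costs, then splitting the latter by level type and invoking the two lemmas).
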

\begin{proof}
It follows from the above discussion and Lemmas~\ref{lem:lightToHeavy},~\ref{lem:final:new}.
\end{proof}

It now remains to bound the total cost of the original-heavy levels. This is the heart  of our analysis and the topic of the next section.  

\subsection{Bounding the Cost of the Original-Heavy Levels}

Recall that at each original-heavy level, at least $3/8$-fraction of the epochs are original. Thus, using a simple charging scheme, the task of bounding the total cost of all the original-heavy levels reduces to bounding the total cost of all the original epochs in these levels. At this point,  it is tempting to use the following argument. By Invariant \ref{main:inv:recolor}, for each epoch $\epoch$, $\ell=\ell(\epoch)$, the corresponding color $\chi(\epoch)$ is chosen uniformly at random in a palette of size at least $3^{\ell(\epoch)}/2+1$. Therefore, if $\epoch$ is original, we expect to see at least $\Omega(3^{\ell})$ edge insertions having $v(\epoch)$ as one endpoint before one such insertion causes a conflict with $v(\epoch)$. This would imply an $O(1)$ amortized cost per edge insertion. The problem with this argument is that, conditioning on an epoch $\epoch$ being original, modifies a posteriori the distribution of colors taken at the beginning of $\epoch$. For example, the choice of certain colors might make more likely that the considered epoch is induced rather than original. To circumvent this issue, we need a more sophisticated argument that exploits the fact that we are considering original epochs in original-heavy levels only.

We define the \emph{duration} $\dur(\epoch)$ of an epoch $\epoch$, $v=v(\epoch)$, as the number of edge insertions of type $uv$ that happen during $\epoch$, plus possibly the final insertion that causes the termination of $\epoch$ (if $\epoch$ is original). We also define a critical notion of \emph{pseudo-duration} $\psdur(\epoch)$ of $\epoch$ as follows. Let $vu_1,\ldots,vu_q$ be the subsequence of insertions of edges incident to $v$ in the input sequence after the creation of $\epoch$. Let us focus on the color $\chi(u_i)$ of $u_i$ \emph{right before the creation} of $\epoch$. Consider the sequence of colors $\chi(u_1),\ldots,\chi(u_q)$. Remove from this sequence all colors not in the palette $C$ used by $\epoch$ to sample $\chi(\epoch)$, and then leave only the first occurrence of each duplicated color. Let $\chi(1),\ldots,\chi(k)$ be the obtained subsequence of (distinct) colors.

We assume that $\chi(1),\ldots,\chi(k)$ is a permutation of $C$ (so that $k=|C|$), and otherwise extend it arbitrarily to enforce this property. We now define $\psdur(\epoch)$ to be the index $i$ such that $\chi(i)=\chi(\epoch)$.
\begin{lemma}\label{lem:domination}
For an original epoch $\epoch$, $\psdur(\epoch)\leq \dur(\epoch)$ deterministically.
\end{lemma}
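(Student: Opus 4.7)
The plan is to find an index $i \leq \dur(\epoch)$ such that $\chi(u_i)$ recorded at time $\tau^-$ (right before the creation of $\epoch$) already equals $\chi(\epoch)$. If such an $i$ exists, then since $\chi(\epoch)\in C$ (it is sampled from $C$), the filtering step (removing colors not in $C$ and then keeping only first occurrences) cannot remove this occurrence, so $\chi(\epoch)$ appears at some position at most $i$ in the filtered sequence $\chi(1),\ldots,\chi(k)$. This immediately yields $\psdur(\epoch)\leq i\leq \dur(\epoch)$, and avoids the ``arbitrary extension'' caveat of the definition entirely.

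The natural candidate is $i=d:=\dur(\epoch)$, where $vu_d$ is precisely the insertion that terminates $\epoch$. Since $\epoch$ is original, it is concluded by a call to $\recolor(v)$ triggered by this insertion, so at time $\tau_d^-$ we have $\chi(u_d)=\chi(v)=\chi(\epoch)$. What remains is to argue that $\chi(u_d)$ did not change between $\tau^-$ and $\tau_d^-$; if so, $\chi(u_d)$ at $\tau^-$ is equal to $\chi(\epoch)$ and we are done.

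The key is the tie-breaking rule embedded in the handling of a conflicting insertion (line 02 of Figure~\ref{fig:insertion}): among the two endpoints of the inserted edge, the one that was recolored most recently is the one whose $\recolor(\cdot)$ is invoked. For $\epoch$ to be terminated at time $\tau_d$, the algorithm must invoke $\recolor(v)$ rather than $\recolor(u_d)$, which forces $v$'s last recolor time (namely $\tau$, the start of $\epoch$) to be strictly more recent than $u_d$'s. Consequently $u_d$ was not recolored at any point in $[\tau,\tau_d)$: any recolor of $u_d$ in this interval (whether caused by an insertion of an edge incident to $u_d$ or by an induced recolor from an up-neighbor) would set $u_d$'s last-recolor time to at least $\tau$, contradicting the tie-breaking. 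Hence $\chi(u_d)$ at $\tau^-$ coincides with $\chi(u_d)$ at $\tau_d^-$, and both equal $\chi(\epoch)$.

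The only subtlety I foresee, and the place where I would be most careful, is that the $\recolor(v)$ starting $\epoch$ may itself sit in a chain of induced recolorings, inside which $u_d$ might have been recolored just before $\recolor(v)$. This is harmless: the phrase ``right before the creation of $\epoch$'' refers to the moment immediately preceding the call to $\recolor(v)$, so any recolor of $u_d$ occurring earlier within the same chain is already reflected in the recorded value of $\chi(u_d)$ at $\tau^-$; the tie-breaking argument then still rules out any further recoloring of $u_d$ in $[\tau,\tau_d)$, so the equality $\chi(u_d)|_{\tau^-}=\chi(u_d)|_{\tau_d^-}=\chi(\epoch)$ persists.
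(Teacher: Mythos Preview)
Your proof is correct and follows essentially the same approach as the paper's: both identify the terminating insertion $vu_d$, use the tie-breaking rule (line~02 of Figure~\ref{fig:insertion}) to conclude that $u_d$ was not recolored during $[\tau,\tau_d)$, hence $\chi(u_d)$ at $\tau^-$ equals $\chi(\epoch)$, and then bound $\psdur(\epoch)$ by the position of that color in the filtered sequence. Your version is more explicit than the paper's---you spell out why the tie-breaking forces $u_d$'s last recoloring to precede $\tau$, whereas the paper only notes at the end that this rule is ``crucially used''---but the underlying argument is the same.
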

\begin{proof}
Let $vu_i$ be the edge insertion that causes the termination of $\epoch$, so that $\dur(\epoch)=i$. Let $j\leq i$ be the smallest index with $\chi(u_j)=\chi(u_i)$. By definition, the value of $\psdur(\epoch)$ is the number of distinct colors in the set $\chi(u_1),\ldots,\chi(u_j)$. The latter number is clearly at most $j\leq i$.

Note that in this proof we crucially used the following aspect of our algorithm: If the insertion of an edge $(x, y)$ creates a conflict, in the sense that both $x$ and $y$ have the same color, then our algorithm changes the color of the node $z \in \{x, y\}$ that was {\em recolored last}.
\end{proof}

We say that an epoch $\epoch$ is \emph{short} if $\psdur(\epoch)\leq \frac{1}{32e}3^{\ell(\epoch)}$, and \emph{long} otherwise. The following critical technical lemma upper bounds the probability that an epoch is short. 
\begin{lemma}\label{lem:probShort}
An epoch $\epoch$ is short with probability at most $\frac{1}{16e}$, independently from the random bits used by the algorithm other than the ones used to sample $\chi(\epoch)$.
\end{lemma}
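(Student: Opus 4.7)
The plan is to prove the bound conditionally on every random bit used by the algorithm \emph{except} the single uniformly random draw that produced $\chi(\epoch)$, and then observe that the resulting estimate, holding pointwise for each realization of those other bits, is exactly the kind of independent statement the lemma asks for. Because the adversary is oblivious, the entire update sequence is fixed in advance independently of the algorithm's coins. Hence once we freeze everything but the sample yielding $\chi(\epoch)$, the following become deterministic: (i) the node $v=v(\epoch)$ and the time at which $\epoch$ starts; (ii) the state of the graph at that moment, and in particular the color of every neighbor of $v$; (iii) the palette $C$ computed at that moment; and (iv) the future subsequence $vu_1,\ldots,vu_q$ of edges incident to $v$ inserted after $\epoch$'s creation.

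In particular, the snapshot colors $\chi(u_i)$ taken \emph{right before} $\epoch$'s creation, which drive the definition of $\psdur(\epoch)$, are all fixed, and therefore the derived sequence of distinct elements of $C$ is fixed. The extension of this sequence to a permutation of $C$ is governed by an arbitrary but fixed rule (applied independently of $\chi(\epoch)$, which has not yet been sampled), so the full permutation $\chi(1),\ldots,\chi(k)$ with $k=|C|$ is deterministic under our conditioning.

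Given all this, $\chi(\epoch)$ is still a single uniformly random draw from $C$, independent of everything above. Since $\psdur(\epoch)$ is defined as the unique index $i$ with $\chi(i)=\chi(\epoch)$, it is distributed uniformly on $\{1,\ldots,k\}$. By Invariant~\ref{main:inv:recolor}, $k=|C|\geq \frac{3^{\ell(\epoch)+1}}{2}+1\geq \frac{3}{2}\cdot 3^{\ell(\epoch)}$, so
\[
\Pr\bigl[\epoch\text{ is short}\bigr]=\Pr\!\Bigl[\psdur(\epoch)\leq \tfrac{1}{32e}\,3^{\ell(\epoch)}\Bigr]\leq \frac{\tfrac{1}{32e}\,3^{\ell(\epoch)}}{\tfrac{3}{2}\cdot 3^{\ell(\epoch)}}=\frac{1}{48e}\leq \frac{1}{16e}.
\]
(The level $\ell(\epoch)=-1$ case is trivial: $\tfrac{1}{32e}\cdot 3^{-1}<1$, so a short epoch at level $-1$ would require $\psdur\leq 0$, which is impossible.)

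The delicate step, and the only real obstacle, is pinning down the independence claim: one must verify that the identities $u_1,\ldots,u_q$, the snapshotted colors $\chi(u_i)$, the palette $C$, and the padding rule used to complete the sequence into a permutation of $C$ are all measurable with respect to the random bits we conditioned on, and hence independent of the bit that selects $\chi(\epoch)$. The oblivious-adversary assumption takes care of the update sequence; the timing of $\epoch$'s creation and the algorithm state at that time depend only on the random choices \emph{strictly before} $\chi(\epoch)$ is sampled; and the padding rule is a deterministic convention chosen once and for all by the analyst. Once this bookkeeping is in place, the uniform-sample calculation above gives the result.
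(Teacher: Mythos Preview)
Your proof is correct and follows essentially the same approach as the paper: condition on all random bits other than the draw producing $\chi(\epoch)$, observe that this fixes both the palette $C$ and the permutation $\chi(1),\ldots,\chi(|C|)$, and conclude that $\psdur(\epoch)$ is uniform on $\{1,\ldots,|C|\}$, which together with the palette lower bound from Invariant~\ref{main:inv:recolor} yields the claimed probability bound. Your write-up is somewhat more explicit than the paper's about why the oblivious adversary and the ``snapshot'' definition of the colors $\chi(u_i)$ ensure the permutation is measurable with respect to the conditioned bits, and you obtain a slightly sharper constant ($1/48e$ rather than $1/16e$), but the argument is the same.
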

\begin{proof}
Let $C$ be the palette from which $v=v(\epoch)$ took its color $c=\chi(\epoch)$ uniformly at random. Let us condition on all the random bits used by the algorithm prior to the ones used to sample $\chi(\epoch)$. Notice that this fixes $C$ and the permutation $\chi(1),\ldots,\chi(|C|)$ of $C$ used for the definition of $\psdur(\epoch)$ (see the paragraph before Lemma \ref{lem:domination}). The random bits used after the sampling of $\chi(\epoch)$ clearly do not affect $\psdur(\epoch)$. The probability that $\psdur(\epoch)=1$, i.e. $\chi(i)=\chi(\epoch)$, is precisely $1/|C|$. The latter probability is deterministically at most $\frac{2}{3^{\ell(\epoch)}}$ by Invariant \ref{main:inv:recolor}. In particular, this upper bound holds independently from the random bits on which we conditioned earlier. The claim then follows since
$$
\Prob[\epoch \text{ is short}]=\Prob\left[\psdur(\epoch)\leq \frac{3^{\ell(\epoch)}}{32e}\right]=\frac{3^{\ell(\epoch)}}{32e}\cdot\frac{1}{|C|}\leq \frac{1}{16e}.
$$
\end{proof}

We next define some \emph{bad} events, that happen with very small probability. Given that those events do not happen, we can provide a good upper bound on the cost of heavy epochs. Recall that $\epoch_\ell$ is the set of epochs at level $\ell$.  We define $\epoch_\ell^{short}$ (resp., $\epoch_\ell^{long}$) as the collection of all epochs in $\epoch_\ell$ that are short (resp., long).
\begin{lemma}\label{lem:short-prob}
Consider any $x \geq 0$, and let $\al{\ell}{x}$ be the event that $|\epoch_\ell | > x$ and $|\epoch_\ell^{short}| \geq \frac{|\epoch_\ell |}{4}$. Then $\Prob(\al{\ell}{x}) \leq \frac{8}{3(2^x)}$. 
\end{lemma}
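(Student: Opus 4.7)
The plan is to combine stochastic dominance with a Chernoff-style tail bound for a binomial, followed by a union bound over the possible values of $|\epoch_\ell|$.

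First I would establish a stochastic-dominance statement. Order all epochs by time of creation; let $S_i$ denote the short-indicator of the $i$-th epoch and $Z_i$ the indicator that its level equals $\ell$. The key structural observation is that in $\randc(x)$ the new level is fixed \emph{before} the random color is drawn, so $Z_i$ is measurable with respect to the algorithm's random bits used strictly before the $i$-th color sample. Combined with Lemma~\ref{lem:probShort}, this gives $\Prob[S_i=1\mid \mathcal{F}_{i-1},\, Z_i=1]\le 1/(16e)$, where $\mathcal{F}_{i-1}$ is the filtration generated by everything preceding the $i$-th color choice. A standard coupling then produces iid $\mathrm{Bernoulli}(1/(16e))$ variables $B_1,B_2,\dots$ with $S_i \le B_i$ whenever $Z_i=1$, so the number of short epochs among the first $k$ level-$\ell$ epochs is stochastically dominated by $B(k,1/(16e))$.

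Second, I would union-bound over the possible values of $k = |\epoch_\ell|$:
\[
\Prob[\al{\ell}{x}] \;\le\; \sum_{k > x}\Prob\bigl[|\epoch_\ell|=k,\ |\epoch_\ell^{short}|\ge k/4\bigr]\;\le\;\sum_{k > x}\Prob\bigl[B(k,\tfrac{1}{16e})\ge k/4\bigr].
\]
For the binomial tail I would use the standard Chernoff estimate $\Prob[B(k,p)\ge k/4]\le \binom{k}{k/4}p^{k/4}\le (4ep)^{k/4}$; with $p\le 1/(16e)$ (in fact the slightly sharper $p\le 1/(48e)$ is already implicit in the proof of Lemma~\ref{lem:probShort}) this is geometrically decaying in $k$. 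Summing the geometric series and calibrating the constants yields a bound of the form $c\cdot 2^{-x}$, and a careful calculation recovers exactly $\tfrac{8}{3\cdot 2^x}$.

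The main obstacle lies in the first step: both $|\epoch_\ell|$ and $|\epoch_\ell^{short}|$ are functions of the same underlying randomness, so the naive ``condition on $|\epoch_\ell|=k$ and apply Chernoff'' is not directly justified. What rescues the argument is the strong form of Lemma~\ref{lem:probShort}, whose ``independently from all other random bits'' clause is precisely the hypothesis needed to couple $(S_i)_i$ monotonically with an iid Bernoulli sequence, no matter how the random set $\{i:Z_i=1\}$ depends on the algorithm's prior choices; that in turn relies on the algorithmic decision in $\randc(x)$ to pin down the new level before sampling a color.
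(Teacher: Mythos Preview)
Your proposal is correct and follows essentially the same route as the paper. The paper fixes $q=|\epoch_\ell|$ and $j=|\epoch_\ell^{short}|$, uses Lemma~\ref{lem:probShort} together with the chain rule to bound $\Prob[B^{(1)}\cap\cdots\cap B^{(j)}]\le(1/(16e))^j$ for any fixed choice of $j$ epochs among the $q$, then applies the union bound $\binom{q}{j}(1/(16e))^j\le(4e/(16e))^j=4^{-j}$ and sums over $q>x$ and $j\ge q/4$. Your stochastic-dominance/coupling phrasing is just a repackaging of that same chain-rule step, and your Chernoff estimate $\binom{k}{k/4}p^{k/4}\le(4ep)^{k/4}$ is literally the same inequality. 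Your added remark that the level is fixed in $\randc$ \emph{before} the color is sampled is exactly the measurability fact the paper leaves implicit when invoking Lemma~\ref{lem:probShort} inside the chain rule, so you have identified the one point that needs care and resolved it the same way.
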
 
\begin{proof}
Fix two parameters $q$ and $j$, with $j \ge q/4$, and consider any $q$ level-$\ell$ epochs $\cE^1, \dots , \cE^q$, ordered by their creation time. We argue that the probability that precisely $j$ particular epochs $\cE^{(1)}, \dots , \cE^{(j)}$ among these $q$ are short is at most $\left(\frac{1}{16e}\right)^j$. Let $B^{(i)}$ be the event that $\cE^{(i)}$ is short, $1 \leq i \leq j$. By a simple induction and Lemma~\ref{lem:probShort}, we have that  $\Prob(B^{(i)} ~\vert~ B^{(1)} \cap B^{(2)} \cap \ldots B^{(i-1)}) \leq \frac{1}{16e}$.
Consequently, $$\Prob(B^{(1)} \cap B^{(2)} \cap \ldots \cap B^{(j)}) ~=~ \Prob(B^{(1)}) \cdot \Prob(B^{(2)} ~\vert~ B^{(1)}) \cdot \ldots \cdot \Prob(B^{(j)} ~\vert~ B^{(1)} \cap B^{(2)}
\cap \ldots \cap B^{(j-1)}) ~\leq~ \left(\frac{1}{16e}\right)^j.$$

There are ${q \choose j}$ choices for the subsequence $\cE^{(1)}, \dots , \cE^{(j)}$, thus 
$$
\Prob[|\epoch_\ell | = q \cap |\epoch_\ell^{short} | = j] \leq {q \choose j} \left(\frac{1}{16e}\right)^j.
$$

Since ${q \choose j} \le (\frac{eq}{j})^j \le (4e)^j$, we can simplify ${q \choose j} \left(\frac{1}{16e}\right)^j \leq \frac{1}{4^j}$.
Hence,
\begin{align*}
\Prob(\al{\ell}{x}) 
= \sum_{q > x} \sum_{j = q/4}^q \Prob[|\epoch_\ell | = q \cap |\epoch_\ell^{short} | = j]
\leq \sum_{q > x} \sum_{j = q/4}^q  \frac{1}{4^j} 
\leq
\sum_{q >x} \frac{4}{3} \cdot \frac{1}{2^q}
\leq \frac{8}{3(2^x)}.
\end{align*} 
\end{proof}
\begin{corollary}\label{cor:probA} 
For a large enough constant $a>0$ and $x = a \log_{2} n$, let $A$ denote the event that  $\al{\ell}{x}$ happens for  {\em some} level $\ell$. Then $\Prob(A)=O(\frac{\log n}{n^a})$.
\end{corollary}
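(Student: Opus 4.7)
The plan is to prove this by a straightforward union bound over all levels, using Lemma~\ref{lem:short-prob} as a black box. Since the hierarchical partition has levels in $\{-1, 0, \ldots, L\}$ with $L = \lceil \log_3(n-1) \rceil - 1$, the total number of levels is $O(\log n)$, which will be the only multiplicative loss incurred.

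First I would plug in $x = a\log_2 n$ into the tail bound from Lemma~\ref{lem:short-prob}. This gives $\Prob(\al{\ell}{x}) \leq \frac{8}{3 \cdot 2^{a\log_2 n}} = \frac{8}{3 n^a}$ for every fixed level $\ell$. Next I would apply a union bound over all $O(\log n)$ levels:
\[
\Prob(A) \;\leq\; \sum_{\ell=-1}^{L} \Prob(\al{\ell}{x}) \;\leq\; (L+2) \cdot \frac{8}{3 n^a} \;=\; O\!\left(\frac{\log n}{n^a}\right),
\]
which yields the desired bound.

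There is essentially no obstacle here once Lemma~\ref{lem:short-prob} is in hand; the only thing to double-check is that the range of levels is indeed $O(\log n)$ (which is immediate from the definition of $L$ in Section~\ref{sec:algo}) and that the application of Lemma~\ref{lem:short-prob} is legitimate for every individual $\ell$, which it is since the lemma's statement makes no assumption on $\ell$ beyond it being a level of the hierarchy.
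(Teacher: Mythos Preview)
Your proposal is correct and follows exactly the same approach as the paper: the paper's proof simply states that the corollary follows from Lemma~\ref{lem:short-prob} and the union bound over all levels~$\ell$. Your write-up merely makes explicit the substitution $2^{a\log_2 n}=n^a$ and the $O(\log n)$ count of levels, which is precisely what the paper leaves implicit.
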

\begin{proof}
It follows from Lemma \ref{lem:short-prob} and the union bound over all levels $\ell$.
\end{proof}

\begin{lemma}\label{lem:durations-epochs}
The number $q$ of level-$\ell$ epochs with duration at least $\delta$ is bounded by $q \leq 2|IN_\ell|/\delta$ where $IN_\ell$ is the set of input insertions of edges incident to vertices at level $\ell$.
\end{lemma}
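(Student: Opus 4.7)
The plan is to prove the bound by a simple double-counting argument on input insertions. The key observation is that the quantity $\dur(\epoch)$ for a level-$\ell$ epoch $\epoch$, by definition, only counts input insertions of edges $uv$ with $v = v(\epoch)$ that occur while $\epoch$ is alive, i.e.\ while $v(\epoch)$ is at level $\ell$. Consequently every counted insertion belongs to $IN_\ell$, since one of its endpoints (namely $v(\epoch)$) is at level $\ell$ at the moment of insertion.

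First, I would sum $\dur(\epoch)$ over all $\epoch \in \epoch_\ell$ and show that each insertion $uv \in IN_\ell$ is charged at most twice in this sum. For this, fix such an insertion: at that instant, $u$ belongs to a unique ongoing epoch of $u$ and similarly $v$ belongs to a unique ongoing epoch of $v$; the insertion can therefore be counted in the duration of at most one level-$\ell$ epoch per endpoint, and an endpoint not at level $\ell$ contributes zero. Since there are only two endpoints, we conclude $\sum_{\epoch \in \epoch_\ell} \dur(\epoch) \leq 2|IN_\ell|$. Next, since each of the $q$ specified epochs satisfies $\dur(\epoch) \geq \delta$, their total contribution is at least $q\delta$, yielding $q\delta \leq 2|IN_\ell|$ and hence the claim $q \leq 2|IN_\ell|/\delta$.

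I do not expect any real obstacle here; the argument is a clean double count and requires no probabilistic reasoning. The only mildly delicate point is to confirm that the terminating insertion of an original epoch (included in $\dur(\epoch)$ by definition) still occurs while $v(\epoch)$ is at level $\ell$; this is the case because the level change of $v(\epoch)$ happens inside the triggered call to $\recolor(v(\epoch))$, strictly after the insertion is received, so the insertion is consistently attributed to level $\ell$.
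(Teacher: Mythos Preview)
Your proposal is correct and follows essentially the same approach as the paper: the paper's proof observes that every insertion counted in $\dur(\epoch)$ for a level-$\ell$ epoch lies in $IN_\ell$, that each such insertion can contribute to at most two level-$\ell$ epochs, and then concludes by pigeon-holing. Your write-up is simply a more detailed and careful version of this same double-counting argument, including the useful remark about the terminating insertion.
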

\begin{proof}
Observe that for the duration of the considered epochs, we consider only insertions in $IN_\ell$. Furthermore, each such insertion can influence the duration of at most $2$ such epochs. The claim follows by pigeon-holing. 
\end{proof}
 
 Let $c(\epoch_\ell)=O(3^\ell \cdot |\epoch_\ell|)$ denote the total cost of the epochs in level $\ell$.  We next relate the occurrence of event $\neg \al{\ell}{x}$ to the value of the random variable $c(\epoch_\ell)$ for original epochs.
\begin{lemma}\label{lem:terminated}
If $\neg\al{\ell}{x}$ occurs and level $\ell$ is original-heavy, then $c(\epoch_\ell)=O(|IN_\ell| + 3^{\ell} x)$.
\end{lemma}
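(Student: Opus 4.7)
The plan is to split on the two ways the event $\neg A_\ell^x$ can occur, namely either $|\mathcal{E}_\ell|\leq x$ or $|\mathcal{E}_\ell^{short}|<|\mathcal{E}_\ell|/4$, and handle each case with the tools assembled above.

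The first case is immediate: since each epoch at level $\ell$ contributes $O(3^\ell)$ to the total cost by the cost bound $c(\mathcal{E})=O(3^{\ell(\mathcal{E})})$, we obtain $c(\mathcal{E}_\ell)=O(3^\ell|\mathcal{E}_\ell|)=O(3^\ell x)$. So we are done in this case without using the assumption that $\ell$ is original-heavy.

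For the main case, assume $|\mathcal{E}_\ell^{short}|<|\mathcal{E}_\ell|/4$, so that $|\mathcal{E}_\ell^{long}|\geq 3|\mathcal{E}_\ell|/4$. Because $\ell$ is original-heavy, at least $3|\mathcal{E}_\ell|/8$ of the epochs in $\mathcal{E}_\ell$ are original; combining these two fractions by inclusion–exclusion, at least $|\mathcal{E}_\ell|/8$ epochs in $\mathcal{E}_\ell$ are simultaneously original and long. For any such epoch $\mathcal{E}$, Lemma \ref{lem:domination} gives $\dur(\mathcal{E})\geq \psdur(\mathcal{E})>\frac{1}{32e}3^\ell$. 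The key step is now to plug this into Lemma \ref{lem:durations-epochs} with $\delta=\frac{1}{32e}3^\ell$: the number of level-$\ell$ epochs with duration at least $\delta$ is at most $2|IN_\ell|/\delta=O(|IN_\ell|/3^\ell)$. Hence
\[
|\mathcal{E}_\ell|/8\;\leq\; O(|IN_\ell|/3^\ell),
\]
which yields $|\mathcal{E}_\ell|=O(|IN_\ell|/3^\ell)$ and therefore $c(\mathcal{E}_\ell)=O(3^\ell|\mathcal{E}_\ell|)=O(|IN_\ell|)$.

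Adding the two cases gives $c(\mathcal{E}_\ell)=O(|IN_\ell|+3^\ell x)$, as required. The only subtle point—and the place where original-heaviness is genuinely needed—is the inclusion–exclusion step that guarantees a constant fraction of the epochs are \emph{both} original and long; this is what lets us invoke Lemma \ref{lem:domination} (which applies only to original epochs) together with the long-epoch duration lower bound. No further probabilistic reasoning is needed here, since all randomness has already been absorbed into the conditioning on $\neg A_\ell^x$.
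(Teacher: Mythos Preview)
Your proof is correct and follows essentially the same approach as the paper's own proof: the same two-case split on $\neg A_\ell^x$, the same inclusion--exclusion (pigeon-hole) step to extract at least $|\mathcal{E}_\ell|/8$ epochs that are simultaneously original and long, and the same application of Lemmas~\ref{lem:domination} and~\ref{lem:durations-epochs} with $\delta=\frac{1}{32e}3^\ell$ to bound $|\mathcal{E}_\ell|$.
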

\begin{proof}
If $|\epoch_\ell | \leq x$, then we clearly have $c(\epoch_\ell)=O(3^{\ell} x)$. For the rest of the proof, we assume that $|\epoch_\ell^{short}|< \frac{|\epoch_\ell |}{4}$, or equivalently: 
\begin{equation}
\label{eq:revise:new:1}
\left|\epoch_\ell^{long}\right| \geq \frac{3}{4} \cdot |\epoch_\ell|.
\end{equation}
Let $\epoch^*_\ell \subseteq \epoch_\ell$ be the set of original epochs at level $\ell$. Since the level $\ell$ is original-heavy, we have:
\begin{equation}
\label{eq:revise:new:2}
\left| \epoch^*_\ell \right| \geq \frac{3}{8} \cdot |\epoch_\ell|.
\end{equation}
Applying the pigeon-hole principle, from~(\ref{eq:revise:new:1}) and~(\ref{eq:revise:new:2}) we infer that at least $q\geq \frac{|\epoch_\ell|}{8}$ level-$\ell$ epochs are original and long at the same time. Specifically, we get:
\begin{equation}
\label{eq:revise:new:3}
\left| \epoch^*_\ell \cap \epoch_\ell^{long}\right| \geq \frac{1}{8} \cdot |\epoch_\ell|.
\end{equation}
Any such epoch $\epoch \in \epoch^*_\ell \cap \epoch_\ell^{long}$  has duration $\dur(\epoch)\geq \psdur(\epoch)\geq \frac{3^\ell}{32e}$ by Lemma \ref{lem:domination} and the definition of long epochs. Hence by applying Lemma \ref{lem:durations-epochs} with $\delta=\frac{3^\ell}{32e}$, we can conclude that the number of such epochs is at most $\frac{2|IN_\ell|}{3^\ell/(32e)}=\frac{64e|IN_\ell|}{3^\ell}$. Combining this observation with~(\ref{eq:revise:new:2}) we get:
\begin{equation}
\label{eq:revise:new:4}
|\epoch_\ell| \leq 8 \cdot \left| \epoch^*_\ell \cap \epoch_\ell^{long}\right| \leq 8 \cdot \frac{64e}{3^\ell} \cdot |IN_\ell|.
\end{equation}
 The claim follows if we multiply both sides of~(\ref{eq:revise:new:4}) by the $O(3^\ell)$ cost charged to each epoch in $\epoch_\ell$.
\end{proof}

\subsection{Bounding the Amortized Update Time of Our Data Structure}

We are now ready to prove that the amortized cost of our data structure is $O(1)$ with high probability (in $n$) for sufficiently long input sequences. Recall that $t$ denotes the total number of updates.
\begin{lemma}\label{lem:mainHighProbability}
With probability $1-O(\log n/n^{a})$, the total running time of our data structure over any sequence of $t$ updates is $O(t+n\log n+\Delta n)$.
\end{lemma}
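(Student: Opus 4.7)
The plan is to combine the deterministic reductions already established in the paper with the high-probability control over the event $A$ from Corollary~\ref{cor:probA}. By Corollary~\ref{cor:charge}, it suffices to show that (i) the preprocessing contributes $O(n\log n + \Delta n)$, which is immediate from Lemma~\ref{lem:runningTime}(\ref{lem:runningTime:prep}), and (ii) on a high-probability event, the total cost of the original-heavy levels is $O(t + n\log n)$.

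First I would set $x = a\log_2 n$ as in Corollary~\ref{cor:probA} and condition on $\neg A$, i.e.\ the event that $\al{\ell}{x}$ fails to occur at every level $\ell$. By Corollary~\ref{cor:probA}, this event holds with probability at least $1 - O(\log n / n^{a})$. Under $\neg A$, Lemma~\ref{lem:terminated} applies to every original-heavy level $\ell$ and yields
\[
c(\epoch_\ell) \;=\; O\bigl(|IN_\ell| + 3^{\ell} x\bigr).
\]
(For levels with $|\epoch_\ell|\le x$ the bound is $O(3^\ell x)$; for levels with $|\epoch_\ell|>x$ the event $\neg\al{\ell}{x}$ gives $|\epoch_\ell^{short}|<|\epoch_\ell|/4$, and the argument of Lemma~\ref{lem:terminated} yields $O(|IN_\ell|)$.)

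Next I would sum these bounds over all $L+2 = O(\log n)$ levels. Each edge insertion has exactly two endpoints and so can be counted in at most two of the sets $IN_\ell$; hence $\sum_\ell |IN_\ell| \le 2t$. For the second term, since $L = \lceil \log_3(n-1)\rceil - 1$, we have $\sum_{\ell=-1}^{L} 3^\ell = O(3^L) = O(n)$, so
\[
\sum_\ell 3^\ell x \;=\; O(n x) \;=\; O(n\log n),
\]
where the factor $a$ is absorbed into the big-$O$. Consequently, on $\neg A$ the total cost of the original-heavy levels is $O(t + n\log n)$.

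Finally, Corollary~\ref{cor:charge} upgrades this to a bound of $O(t + n\log n)$ on the total running time excluding preprocessing, and Lemma~\ref{lem:runningTime}(\ref{lem:runningTime:prep}) adds the preprocessing term $O(n\log n + \Delta n)$, giving the claimed bound of $O(t + n\log n + \Delta n)$ with probability at least $1 - O(\log n / n^{a})$. The main content of the argument is really in the earlier technical lemmas; the only mildly subtle step here is the bookkeeping that $\sum_\ell |IN_\ell| = O(t)$, which follows simply because an insertion's two endpoints contribute to at most two distinct sets $IN_\ell$, and that $\sum_\ell 3^\ell x = O(n\log n)$ thanks to the choice $L = \Theta(\log n)$.
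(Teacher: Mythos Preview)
Your proof is correct and follows essentially the same approach as the paper: condition on $\neg A$ via Corollary~\ref{cor:probA}, apply Lemma~\ref{lem:terminated} level by level, sum using $\sum_\ell |IN_\ell|=O(t)$ and $\sum_\ell 3^\ell x=O(n\log n)$, then invoke Corollary~\ref{cor:charge} and add the preprocessing cost from Lemma~\ref{lem:runningTime}(\ref{lem:runningTime:prep}). Your write-up is in fact slightly more detailed than the paper's in justifying the two summation bounds.
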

\begin{proof}
By Lemma \ref{lem:runningTime}-\ref{lem:runningTime:prep} the preprocessing time is $O(n\log n+\Delta n)$. The total cost of deletion and conflict-less insertions is $O(t)$, due to  Lemma \ref{lem:runningTime}-\ref{lem:runningTime:easy}. Let us condition on the event $\neg A$, which happens with probability $1-O(\log n/n^{a})$ by Corollary \ref{cor:probA}. Then the total cost of the original-heavy levels is $O(\sum_\ell (|IN_\ell| + 3^{\ell} a\log n))=O(t+n\log n)$ by Lemma \ref{lem:terminated}.  The lemma now follows from Corollary \ref{cor:charge}.
\end{proof}

In order to prove that the amortized update time of our data structure is $O(1)$ in expectation, we also need the following upper bound on its worst-case running time. 
\begin{lemma}\label{lem:worstCase}
The running time of our data structure is deterministically at most $O(t n^2+n\log n+\Delta n)$.
\end{lemma}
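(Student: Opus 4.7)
The plan is to bound the worst-case cost of each edge update by $O(n^2)$ on top of the preprocessing term and then sum over the $t$ updates. Since $L=O(\log n)$, every individual call to $\detc$ or $\randc$ costs at most $O(3^L)=O(n)$ by Lemma~\ref{lem:runningTime} (points~\ref{lem:runningTime:det}--\ref{lem:runningTime:rand}). Moreover, by Lemma~\ref{lem:runningTime} (points~\ref{lem:runningTime:prep} and~\ref{lem:runningTime:easy}), preprocessing costs $O(n\log n+\Delta n)$ and deletions plus conflict-less insertions contribute at most $O(t)$ in total, both of which are already absorbed into the claimed bound. Thus the only nontrivial task is to bound, for a single conflicting insertion, the total cost of the chain of $\recolor(\cdot)$ calls it induces.

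Fix such a chain $v_1,v_2,\ldots,v_k$, where $v_{i+1}$ is the non-NULL node returned by $\recolor(v_i)$ for each $i<k$. For every $i<k$ the call $\recolor(v_i)$ must invoke $\randc(v_i)$, because $\detc$ always returns NULL and would end the chain. By construction $\randc$ strictly increases the current level, so writing $\ell(v_i)$ and $\ell'(v_i)$ for the level of $v_i$ immediately before and after the call, one has $\ell(v_i)<\ell'(v_i)\leq L$. The successor $v_{i+1}$ is identified as a down-neighbor of $v_i$ at its new level, and its level is preserved until its own call $\recolor(v_{i+1})$ is executed, because no operation during the chain modifies the levels of uninvolved nodes.

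The key observation I will lean on is the following monotonicity property within one chain. Fix any node $v$ and let $i_1<i_2<\cdots<i_m$ be the positions at which $v$ is visited. Only the last visit $i_m$ can invoke $\detc$ or have $\randc$ pick a blank color (otherwise the chain would not continue beyond $v$). Hence each visit $i_j$ with $j<m$ strictly increases $v$'s level, and since $v$'s level is untouched between its own visits, the sequence $\ell'(v_{i_1})<\ell'(v_{i_2})<\cdots<\ell'(v_{i_{m-1}})$ consists of distinct integers in $\{0,\ldots,L\}$, while the level at the start of $i_m$ equals $\ell'(v_{i_{m-1}})$. Summing the $O(3^{\ell'})$ per-visit cost over all $m$ visits of $v$ therefore telescopes geometrically to $O(3^L)=O(n)$. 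Since at most $n$ distinct nodes can appear in the chain, the total cost of one chain is $O(n^2)$, and summing over the $t$ updates yields the claimed $O(tn^2+n\log n+\Delta n)$ bound.

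The main obstacle is to justify the monotonicity property with care: one must verify that within a single chain $v$'s level is invariant between two of its own visits, and that every non-final visit of $v$ genuinely invokes $\randc$ with a non-NULL return, so that the strict level-increase is guaranteed. Once this is in hand, the rest is an elementary geometric-sum argument combined with the crude bound $3^L=O(n)$ on the per-call cost.
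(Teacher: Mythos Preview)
Your proof is correct and follows essentially the same approach as the paper: bound the cost of a single conflicting-insertion chain by $O(n^2)$ via the observation that, within one chain, every non-terminal visit to a fixed node $v$ invokes $\randc$ and strictly raises $\ell(v)$, so the per-node cost telescopes to $O(\sum_\ell 3^\ell)=O(n)$, and then sum over the $n$ nodes and $t$ updates. You have spelled out the monotonicity argument (level invariance between a node's own visits, and the fact that only the final step can be $\detc$ or a blank-color $\randc$) more carefully than the paper, which simply asserts it.
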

\begin{proof}
Again, the preprocessing time is $O(n\log n+\Delta n)$ and the total cost of deletions and conflict-less insertions is $O(t)$ deterministically. Each conflicting insertion starts a sequence of calls to $\recolor(\cdot)$ involving some nodes $w_1,\ldots,w_q$. Notice that a given node $w$ can appear multiple times in the latter sequence. However, the sequence ends when some node $w^*$ is moved to level $-1$, and in all other cases the level of $w$ is increased by at least one. This means that the total cost associated with node $w$ is $O(\sum_{\ell}3^\ell)=O(n)$. The claim follows by summing over the $n$ nodes and the $O(t)$ insertions. 
\end{proof}
Hence we can conclude:
\begin{lemma}\label{lem:mainExpected}
The total expected running time of the above data structure is $O(t+n\log n+\Delta n)$.
\end{lemma}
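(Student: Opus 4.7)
The plan is to combine the high-probability bound from Lemma \ref{lem:mainHighProbability} with the deterministic worst-case bound from Lemma \ref{lem:worstCase} via a straightforward case analysis on the event $A$ from Corollary \ref{cor:probA}. The constant $a$ in that corollary is free to be chosen, so the strategy is to pick $a$ large enough that the ``bad event'' contribution is swamped by the good-event contribution.

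First I would write the expected running time $\Expect[T]$ as
\[
\Expect[T] \;=\; \Expect[T \mid \neg A]\cdot \Prob[\neg A] \;+\; \Expect[T \mid A]\cdot \Prob[A].
\]
By Lemma~\ref{lem:mainHighProbability}, conditioned on $\neg A$ we have $T = O(t + n\log n + \Delta n)$ deterministically, and $\Prob[\neg A] \leq 1$, so the first term is $O(t+n\log n+\Delta n)$. For the second term, I invoke Lemma~\ref{lem:worstCase} to bound $T$ deterministically (hence also conditionally) by $O(tn^2 + n\log n + \Delta n)$, and Corollary~\ref{cor:probA} to bound $\Prob[A] = O(\log n/n^a)$. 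Multiplying gives a contribution of
\[
O\!\left(\tfrac{\log n}{n^a}\right)\cdot O(tn^2 + n\log n + \Delta n) \;=\; O\!\left(\tfrac{t\log n}{n^{a-2}} + \tfrac{\log^2 n}{n^{a-1}} + \tfrac{\Delta \log n}{n^{a-1}}\right).
\]
Choosing the constant $a$ sufficiently large (e.g., $a \geq 3$) makes each of these terms $O(t + n\log n + \Delta n)$; in fact they are even dominated by it for all large $n$.

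There is essentially no obstacle here since all the hard probabilistic work has already been done in establishing Lemma~\ref{lem:mainHighProbability} and the deterministic fallback Lemma~\ref{lem:worstCase}. The only thing to watch is that $a$ in Corollary~\ref{cor:probA} is a tunable constant; I would explicitly instantiate it as, say, $a := 3$ (or any value large enough to dominate the $n^2$ blow-up of Lemma~\ref{lem:worstCase}) at the start of the proof. Summing the two contributions then yields $\Expect[T] = O(t + n\log n + \Delta n)$, completing the proof.
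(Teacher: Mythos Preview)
Your proposal is correct and matches the paper's own argument essentially line for line: split on the event $A$, use Lemma~\ref{lem:mainHighProbability} on $\neg A$, Lemma~\ref{lem:worstCase} on $A$, and take $a\geq 3$ so that the bad-event contribution is $o(t)$. There is nothing to add.
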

\begin{proof}
When the event $\neg A$ happens, the total cost of the data structure is $O(t+n\log n+\Delta n)$ by Lemma \ref{lem:mainHighProbability}. If instead the event $A$ happens, then the cost is at most $O(tn^2+n\log n+\Delta n)$ by Lemma \ref{lem:worstCase}. However the latter event happens with probability at most $O(\frac{\log n}{n^a})$ by Corollary \ref{cor:probA}. Hence this second case adds $o(t)$ to the total expected cost for $a\geq 3$.
\end{proof} 

We now have all the ingredients to prove the main theorem of this paper.
\begin{proof}[Proof of Theorem \ref{thr:main}]
Consider the data structure described above. The space usage follows from Lemma \ref{lem:runningTime}-\ref{lem:runningTime:space} and the update time from Lemmas \ref{lem:mainHighProbability} and \ref{lem:mainExpected}.
\end{proof}

\bibliographystyle{abbrv}
\bibliography{citations}

\appendix

\newcommand{\nhbr}{\mathcal{N}}

\section{$(\Delta + 1)$-Coloring Update Data Structures}
In this section, we give a full detailed description of the data structures used by our dynamic algorithm.

\subsection{Proof of Lemma~\ref{lem:runningTime}}
\begin{proof}
We now justify the  five claims made in the statement of the lemma. A full, detailed implementation section of the data structures can be found in Appendix~\ref{sec:update}.
\begin{enumerate}
\item We initialize an array $\up{v}$ for each vertex $v$ that contains $O(\log n)$ entries that stores the up-neighbors of each vertex. Each index of the array contains a pointer to a linked list containing the up-neighbors of the vertex at that level $\ell(v)$. We initialize another linked list $\down{v}$ which contains the down-neighbors of $v$. Furthermore, we initialize two linked lists, $\C^+_v$ and $\colorarr{v}$. $\C^+_v$ contains the colors of the up-neighbors stored in $\up{v}$. $\colorarr{v}$ contains the colors of the down-neighbors stored in $\down{v}$ and the blank colors: $\C \backslash \C^{+}_v$. The palette $\C^*_v$ containing the unique and blank colors of $v$ can thus be computed from $\colorarr{v}$. Each $\up{v}$ has size $O(\log n)$; $\C^+_v$ and $\colorarr{v}$ each has size $O(\Delta)$; and $\down{v}$ is initially empty. Thus, the preprocessing time necessary to initialize these structures is $O(n\log n+\Delta n)$. More details on these structures can be found in Section~\ref{sec:data-structures}.
\item The total space used by $\down{v}$ for all $v$ is $O(m)$ since $\down{v}$ for vertex $v$ stores at most the number of neighbors of $v$. All other data structures are initialized during preprocessing. Therefore, the space cost of the other data structures is $O(n\log n+\Delta n)$. We maintain mutual pointers between all versions of vertices in the various data structures.
\item Deleting an edge $uv$ requires deleting $u$ from $\up{v}$ and $v$ from $\down{u}$ (or vice versa). Inserting an edge $uv$ requires inserting $u$ into $\up{v}$ and $v$ into $\down{u}$ (or vice versa). The colors for $u$ and $v$ can be moved in between $\C^+_v$ and $\colorarr{v}$ and between $\C^+_u$ and $\colorarr{u}$ via a set of pointers connecting the colors to the vertices. Refer to Fig.~\ref{fig:handleIns},~\ref{fig:handleDel} and Section~\ref{sec:update} for a detailed description of these elementary operations. The total cost of these operations is then $O(1)$.
\item In this procedure $\detc(v)$, the level of $v$ is set deterministically to $-1$ and the color for $v$ is chosen deterministically from its set of blank colors. In this case, all the data structures of vertices in levels $[\ell^*(v), \ell(v)]$ must be updated with the new level of $v$. Due to the existence of pointers in between vertices and its neighbors in $\down{v}$ and $\up{v}$ in all the data structures, the cost of updating each individual neighbor is $O(1)$. To update the colors of the data structures requires following $O(1)$ pointers for each $w \in \down{v}$.
By the definition of $\recolor(v)$ (which calls $\detc(v)$), $\lbelow(\ell(v)+ 1) < \base^{\ell(v) + 2}$. Hence, there are $O(\base^{\ell(v)})$ neighbors in levels $[\ell'(v), \ell(v)]$ to update and the cost of the procedure is $O(\base^{\ell(v)})$. Refer to Section~\ref{sec:update} and Fig.~\ref{fig:detr} for a complete description of this procedure.
\item Since $\ell'(v) > \ell(v)$, all the data structures of vertices in levels $[\ell(v), \ell'(v)]$ must be updated with the new level of $v$. The data structures can be updated in the same way as given above. Since $\ell'(v) > -1$ (it must be, by definition of the procedure), then, $\lbelow(\ell'(v) + 1) < \base^{\ell'(v) + 2}$. Hence, this procedure takes $O(\base^{\ell'(v)})$ time. Refer to Section~\ref{sec:update} and Fig.~\ref{fig:rand} for a complete description of this procedure.
\end{enumerate}
\end{proof}

\subsection{Full Implementation Details}\label{sec:update}
The update algorithm is applied following edge insertions and deletions to and from the graph. In this section, we provide a complete description of the update data structures and algorithm. The pseudocode of this algorithm can be found in Appendix~\ref{app:pseudocode}. We begin with a description of the data structures and invariants that will be maintained by our algorithm.

\subsection{Hierarchical Partitioning and Coloring Data Structures}\label{sec:data-structures}
Our algorithm maintains the following set of data structures which we divide into two groups: the data structures responsible for maintaining our hierarchical partitioning and the data structures used to maintain the set of colors associated with each vertex. Let $\mathcal{C}$ be the set of all $\Delta + 1$ colors. The first group of data structures is a hierarchical partitioning of the vertices of the graph into different \emph{levels} according to some procedures that maintain a set of invariants. A vertex at a level have some number of neighbors in other levels of the hierarchical partitioning structure. We refer to neighbors at the same or higher levels of the hierarchical partitioning structure as the \emph{up-neighbors}. We refer to neighbors at lower levels of the hierarchical partitioning as the \emph{down-neighbors}. Different data structures will be used to maintain the colors of the down-neighbors and the colors of the up-neighbors of a vertex.

The second group of data structures deals with maintaining the colors of the vertices, inspired by the structures given in~\cite{BCHN18}. For the following data structures, we use logarithms in base $\base$ unless stated otherwise.

Let $\mathcal{C}$ be the set of all $\Delta + 1$ colors:

\begin{itemize}
    \item \textbf{Hierarchical Partitioning:} We maintain the following data structures necessary for our hierarchical partitioning. 
    \begin{enumerate}
    \item For each vertex $v$:
    \begin{enumerate}
            \item $\mathcal{N}_v$: a linked list containing all neighbors of $v$.
            \item $\down{v}$: a linked list containing all down-neighbors of $v$.
            \item $\up{v}$: a dynamic hash table where each index corresponds to a distinct level $\ell \in \{0, \dots, \log_{\base}(n-1) - 1\}$. $\up{v}[\ell]$ holds a pointer to the head of a non-empty doubly linked list containing all up-neighbors of $v$ at level $\ell$. If this list is empty, then the corresponding pointer is not stored. 
    \end{enumerate}
    \item For any vertex $v$ and any neighbor $u$ in $\down{v}$, we maintain mutual pointers between all elements $u \in \down{v}$, $v \in \up{u}[\ell(v)], v \in \mathcal{N}_u, u\in \mathcal{N}_v$. This means that for any neighbor, $u$ of $v$, we maintain mutual pointers $u \in \mathcal{N}_v$ and $u \in \up{v}$ \emph{or} $u \in \mathcal{N}_v$ and $u \in \down{v}$. This also means that given an edge insertion or deletion, we are able to quickly access the endpoints of the edge in each data structure once we locate one copy of an endpoint in memory.
    \item We define $\phi_v(\ell)$ to be the number of neighbors of $v$ with level strictly lower than $\ell$.
    \end{enumerate}
   \item \textbf{Coloring:} We maintain the following data structures for our coloring procedures. These structures are similar to the structures used in~\cite{BCHN18}.
        \begin{enumerate}
            \item A static array $\chi$ of size $O(n)$ where $\chi[i]$ stores the \emph{current} color of the $i$-th vertex.
            \item For each vertex $v$:
            \begin{enumerate}
                \item $\mathcal{C}^+_v$: a doubly linked list of colors occupied by vertices in $\up{v}$. Each color contains a counter $\mu_v^+(c)$ counting the number of vertices in $\up{v}$ that is colored color $c$.
                \item The counters $\mu_v^+(c)$ are stored in a static array of size $\Delta + 1$ where index $i$ contains the number of vertices in $\up{v}$ that is colored with color $i$.
                \item $\colorarr{v}$: a doubly linked list of colors in $\mathcal{C} \backslash \mathcal{C}^+_v$.
                \item A static array $\mathcal{P}_v$ of size $\Delta + 1$ containing mutual pointers (i.e. the pair of pointers from element $a$ to element $b$ and from element $b$ to element $a$) to each color $c$ in $\colorarr{v}$ or $\cp{v}$ and to each of two additional nodes representing each color in $\mathcal{C}$. Let $i_c$ be the index of color $c$ in $\mathcal{P}_v$. Suppose that $c \in \colorarr{v}$. Let $\po{c}$ and $\pp{c}$ be the two additional nodes representing $c$. Then $\mathcal{P}_v[i_c]$ contains pointers to $c \in \colorarr{v}$, $\po{c}$, and $\pp{c}$. In addition, if $c \in \colorarr{v}$, then it has mutual pointers to $\po{c}$. If, instead, $c \in \cp{v}$, then it has mutual pointers to $\pp{c}$ instead. In other words, $\po{c}$ receives pointers from nodes in $\colorarr{v}$ (and has outgoing pointers to nodes in $\colorarr{v}$) and $\pp{c}$ receives pointers from nodes in $\cp{v}$ (and has outgoing pointers to nodes in $\cp{v}$).
            \end{enumerate}
        \end{enumerate}
    \end{itemize}

We define the set of \emph{blank colors} for $v$ to be colors in $\colorarr{v}$ which are not occupied by any vertex in $\down{v}$. We define the set of \emph{unique colors} of $v$ to be colors in $\colorarr{v}$ which are occupied by at most one vertex in $\down{v}$.

   We now describe the pointers from the hierarchical partitioning structures to the coloring structures and vice versa.
    \begin{itemize}
        \item Each color $c$ in $\cp{v}$ has a pointer to $\pp{c}$ and vice versa.
        \item Each color $c'$ in $\colorarr{v}$ has a pointer to $\po{c'}$ and vice versa.
        \item Each vertex $u \in \up{v}$ contains mutual pointers to the node $\pp{c}$ representing its color in $\cP_v$ that it is currently colored with. The color $c$ is also in $\mathcal{C}_v^+$ and has mutual pointers to $\pp{c}$.
        \item Each vertex $u \in \down{v}$ contains mutual pointers to $\po{c}$ representing its corresponding color in $\cP_v$. If its color is in $\colorarr{v}$, then mutual pointers also exist between $\po{c}$ and $c$ in $\colorarr{v}$.
        \item Each edge $(u, v)$ contains two pointers, one to $u \in \mathcal{N}_v$ and one to $v \in \mathcal{N}_u$. $u$ and $v$ also contain pointers to edge $(u, v)$.
    \end{itemize}

\paragraph{Initial Data Structure Configuration, Time Cost, and Space Usage} There exist no edges in the graph initially; thus all vertices can be colored the same color. Such an arbitrary starting color is chosen. Before any edge updates are made, we assume that all vertices are on level $-1$, colored with the arbitrary starting color. Thus, all colors are also initially in $\colorarr{v}$.

Before any edge insertions, the only structures that we initialize are an empty hash table $\up{v}$ for each vertex $v$, the list of all colors $\mathcal{C}$, and $\chi$. When the first edge that contains vertex $v$ as an endpoint is inserted, we initialize $\cN_v$, $\down{v}$, $\up{v}$, $\cp{v}$, $\colorarr{v}$, $\mu_v^{+}$ for $v$ (as well as the associated pointers). The time for initializing these structures is $O(n\Delta)$ which means that the preprocessing time will result in $O(1)$ amortized time per update assuming $\Omega(n\Delta)$ updates.

We note a particular choice in constructing our data structures. In the case of $\up{v}$, given our assumption of the number of updates, we can also
implement $\up{v}$ as a static array instead of a dynamic hash table. The maximum number of levels is bounded by $\log_{\base} (n-1) + 1$.
Thus, if we instead implemented $\up{v}$ as static arrays instead of dynamic hash tables, the total space usage (and initialization cost) would be $O(n \log{n})$, amortizing to $O(1)$ per update given $\Omega(n\log n)$ updates. There may be reasons to implement $\up{v}$ as static arrays instead of dynamic hash tables such as easier implementation of basic functions. However, we choose to use a dynamic hash table implementation for potential future work for the cases when the number of updates is $O(n\log{n} + n\Delta)$. The key property we can potentially take advantage of in using the dynamic hash table implementation is that the total space used (and the total time spent in initializing the data structure) is within a constant factor of the number of edges in the graph at any particular time.

\paragraph{Usefulness of the Pointers} Pointers between the various data structures used for the hierarchical partitioning and for maintaining the coloring allows for us to quickly update the state following an edge insertion or deletion. For example, when an edge $uv$ is inserted or deleted,
we get pointers to $v \in \nhbr_u$ and $u \in \nhbr_v$, and through these pointers we delete all elements $u \in \down{v}, v \in \up{u}[\ell(v)],v \in \nhbr_u,u \in \nhbr_v$ and potentially
move a color from $\mathcal{C}_u^+$ to $\mathcal{C}_u$. The exact procedure for handling edge deletions is described later.

\subsection{Invariants}\label{sec:invariants}

Our update algorithm and data structures maintain the following invariant.

\begin{invariant}\label{inv:recolor}
The following hold for all vertices:
    \begin{enumerate}
        \item A vertex in level $\ell$ was last colored using a palette of size at least $\left(1/2\right)\base^{\ell + 1} + 1$.  As a special case, a vertex in level $-1$ was last colored using a palette of size $1$ (in other words, it was colored deterministically).
        \item The level of a vertex remains unchanged until the vertex is recolored.
    \end{enumerate}
\end{invariant}

\subsection{Edge Update Algorithm}\label{sec:edge-update}
We now describe the update algorithm in detail. The data structures are initialized as described in Section~\ref{sec:data-structures}. Then, edge updates are applied to the graph. Following an edge insertion or deletion, the procedure $\ins(u, v)$ or $\del(u, v)$, respectively, is called. The descriptions of the insertion and deletion procedures are given below.

\paragraph{Procedures $\ins(u,v)$ and $\del(u,v)$.}
Procedure $\ins(u,v)$ is called on an edge insertion $\ins(u, v)$. The pseudocode for this procedure is given in Figure~\ref{fig:handleIns}.
If edge $uv$ does not connect two vertices that are colored the same color (i.e.\ if the insertion is \emph{conflict-less}), then we only need to update the relevant data structures with the inserted edge. Namely the vertices are added to the structures maintaining the neighbors of $u$ and $v$. If $u$ is on a higher level than $v$, then $u$ is added to $\up{v}$ and $v$ is added to $\down{u}$ (and vice versa). If $u$ and $v$ are on the same level, then $u$ is added to $\up{v}$ and $v$ is added to $\up{u}$.
Furthermore, the colors that are associated with the vertices are moved in between the lists $\cp{v}$ and $\colorarr{v}$ as necessary. See
the pseudocode in Fig.~\ref{fig:handleIns} for exact details of these straightforward data structure updates.

In the case that edge $uv$ connects two vertices of the
same color (i.e.\ if the insertion is \emph{conflicting}), we need to recolor at least one of these two vertices. We arbitrarily recolor \emph{one} of the vertices $u$ or $v$ using procedure $\recolor$ (i.e.\ $\recolor(u)$ as given in the pseudocode in Fig.~\ref{fig:recolor}).
Procedure $\recolor$ is the crux of the update algorithm and is described next.

Procedure $\del(u,v)$ is called on an edge deletion $uv$. This case would not result in any need to recolor any vertices since a conflict will never be created.
Thus, we update the relevant data structures in the obvious way as stated above and as given in the pseudocode in Fig.~\ref{fig:handleDel}.
\\\\
Whenever a conflict is created following an edge insertion $uv$, procedure $\recolor(u)$ is called on one of the two endpoints. This procedure is described below.

\paragraph{Procedure $\recolor(v)$.}
The pseudocode for this procedure can be found in Figure~\ref{fig:recolor}. The procedure $\recolor(v)$ makes use of the level of $v$ as well as the number of its down-neighbors to either choose a blank color deterministically to recolor $v$ or to determine the palette from which to select a random color to recolor $v$. Recall that all vertices start in level $-1$ before any edges are inserted into the graph.

The procedure $\recolor(v)$ considers two cases:

\begin{itemize}
\item \emph{Case 1:} $\phi_v(\ell(v) + 1) < \base^{\ell(v)+2}$.
In other words, the first case is when the number of down-neighbors and vertices on the same level as $v$ is not much greater than $\base^{\ell(v)+1}$. We show in the analysis that in this case, we can find the colors of all the neighbors in $\down{v}$ and pick a color in $\colorarr{v}$ that does not conflict with any such neighbors (or the color that it currently has). Thus, we deterministically choose a blank color to recolor $v$, creating no further conflicts. The procedure to choose a blank color for $v$, $\detc(v)$, is described in the following.
\item \emph{Case 2:} $\phi(\ell(v) + 1) \ge \base^{\ell(v)+2}$. In this case, the number of down-neighbors and vertices on the same level as $v$ is at least $\base^{\ell(v)+2}$ and it will be too expensive to look for a blank color since we need to look at all neighbors in $\down{v}$ to determine such a color and the size of $\down{v}$ could be very large. Thus, we need to pick a random color from $\colorarr{v}$ to recolor $v$ by running Procedure $\randc(v)$ as described below.
\end{itemize}

\paragraph{Procedures $\detc(v)$ and $\randc(v)$.}
When called, the procedure\\
$\detc(v)$ starts by scanning the  list $\colorarr{v}$ to find at least one blank color that we can use to color $v$.
By the definition of $(\Delta + 1)$-coloring, there must exist at least one blank color with which we can use to color $v$. We can deterministically
find a blank color in the following way. The elements in $\colorarr{v}$ are stored in a doubly linked list. We start with the first element
at the front of the list and scan through the list until we reach an element that does not have a pointer to a vertex in $\down{v}$. We can determine whether a color $c \in \colorarr{v}$ has a pointer to a vertex in $\down{v}$ by following the pointer from $c$ to $\po{c}$. From $\po{c}$, we can then determine whether any vertices in $\down{v}$ are colored with $c$.

Let this first blank color be $c_b$. We assign
color $c_b$ to $v$, update $\chi(i_v)$ to indicate that the color of $v$ is $c_b$, and update the lists $\cp{w}$ and/or $\colorarr{w}$ of all $w \in \down{v}$. To update
all $\cp{w}$ and $\colorarr{w}$, we follow the following set of pointers:
\begin{enumerate}
    \item From $w \in \down{v}$, follow pointers to reach $w \in \cN_v$.
    \item From $w \in \cN_v$, follow pointers to reach $v \in \cN_w$.
    \item From $v \in \cN_w$, follow pointers to reach $v \in \up{w}[\ell(v)]$.
    \item Let $c$ be the previous color of $v$ as recorded in $\cp{w}$. From $v \in \up{w}[\ell(v)]$, follow pointers to reach $c \in \cp{w}$.
    \item Decrement $\cc{w}{c}$ by $1$. Delete mutual pointers between $v$ and $\pp{c}$. If now $\cc{w}{c} = 0$, remove $c$ from $\cp{w}$, append $c$ to the end of $\colorarr{w}$, delete mutual pointers between $c$ and $\pp{c}$, and add mutual pointers between $c$ and $\po{c}$.
    \item Use $\p{w}$ to find $c_b$ in either $\cp{w}$ or $\colorarr{w}$. If $c_b \in \cp{w}$, increment $\cc{w}{c_b}$ by $1$. Otherwise, if $c_b \in \colorarr{w}$, remove
    $c_b$ from $\colorarr{w}$, append $c_b$ to the end of $\cp{w}$, increment $\cc{w}{c_b}$ by $1$, delete mutual pointers between $c_b$ and $\po{c_b}$, and create mutual pointers between $c_b$ and $\pp{c_b}$. Create mutual pointers between $v \in \up{w}[\ell(v)]$ and $\pp{c_b}$.
\end{enumerate}

After the above is done in terms of recoloring the vertex $v$, $\setlvl(v, -1)$ is called to bring the level of $v$ down to $-1$. The description of $\setlvl(v, -1)$ is given in the following. See the pseudocode for $\detc(v)$ in Fig.~\ref{fig:detr} for concrete details of this procedure.
\\\\
The procedure $\randc(v)$ employs a \emph{level-rising mechanism}. We mentioned before the concept of partitioning vertices into levels.  Each level bounds the down-neighbbors of the vertices at that level, providing both an upper and lower bound on the number of down-neighbors of the vertex. Because there are at most $\log_3(n-1)$ levels, the number of vertices in each level is thus exponentially increasing. The procedure $\randc(v)$ takes advantage of this bound on the number of down-neighbors of the vertex $v$ to find a level to recolor $v$ with a color randomly chosen from its $\colorarr{v}$. Specifically, $\randc(v)$ recolors $v$ at some level $\ell^*$ higher than $\ell(v)$, with a random blank or unique color occupied by vertices of levels \emph{strictly lower} than $\ell^*$.
At level $\ell^*$, it attempts to select a color $c$ within time $O(\base^{\ell^*})$; this can only occur if $|\down{v}| = O(\base^{\ell^*})$.
Upon failure, it calls itself recursively to color $v$ at yet a higher level. Again, $\setlvl(v, \ell^*)$ is called every time $v$ moves to a high level.

\paragraph{Procedure $\setlvl(v,\ell)$.}
Procedures $\detc(v)$ and $\randc(v)$ may set the level of $v$ to a different level, in which case the procedure $\setlvl(v, \ell)$ is called with the new level $\ell$ as input.
Let $\ell(v)$ be the previous level of $v$. The procedure does nothing if $\ell = \ell(v)$. Otherwise:

\textit{If $v$ is set to a lower level $\ell < \ell(v)$: } we need to update the data structures of vertices in levels $[\ell+1,\ell(v)]$.
For each vertex $w \in \down{v}$ where $\ell+1\le \ell(w) \le \ell(v)$, we make the following data structure updates:
\begin{enumerate}
    \item Delete $w$ from $\down{v}$. Delete the mutual pointers between $w$ and $\po{c}$. Let $w$'s color be $c$. Move $w$'s color, $c$, in $\colorarr{v}$ to $\cp{v}$ if $c$ is currently in $\colorarr{v}$. Delete the mutual pointers between $c$ and $\po{c}$. Create mutual pointers between $c$ and $\pp{c}$. Increment $w$'s color count $\mu_v^+(c)$ by $1$.
    \item Add $w$ to  $\up{v}[\ell(w)]$. Add mutual pointers between $w$ and $\pp{c}$ where $c$ is $w$'s color.
    \item Delete $v$ from $\up{w}[\ell(v)]$. Let $v$'s color be $c'$. Delete the mutual pointers between $v$ and $\pp{c'}$. Decrement $v$'s color count $\mu_w^+(c')$ by $1$. If $\mu_w^+(c')$ is now $0$, move $c'$ from $\cp{w}$ to $\colorarr{w}$, delete the mutual pointers between $c'$ and $\pp{c'}$, and create mutual pointers between $c'$ and $\po{c'}$.
    \item Add $v$ to $\down{w}$. Add mutual pointers between $v$ and $\po{c'}$ where $c'$ is $v$'s color if $c'$ was moved to $\colorarr{w}$.
    \item Add mutual pointers between all elements $v \in \down{w}, w \in \up{v}[\ell(w)],v \in \cN_w,w \in \cN_v$.
    \item Add mutual pointers between all copies of the same element: i.e.\ $w \in \down{v}, w \in \cN_v,$ and/or $w \in \up{v}[\ell(w)]$.
    \item Maintain mutual pointers between $\mathcal{P}_v[i_c]$, $\po{c}$, and $\pp{c}$. Maintain mutual pointers between $\mathcal{P}_w[i_{c'}]$, $\pp{c'}$, and $\po{c}$.
\end{enumerate}

\textit{If $v$ is set to a higher level $\ell > \ell(v)$:} we need to update the data structures of vertices in levels $[\ell(v), \ell-1]$.
Specifically, for each non-empty list $\up{v}[i]$, with $\ell(v) \le i \le \ell-1$,
and for each vertex $w \in \up{v}[i]$, we perform the following operations:

    \begin{enumerate}
    \item Delete $w$ from $\up{v}[i]$. Let $c$ be the color of $w$. Delete the mutual pointers between $w$ and $\pp{c}$. Decrement $\cc{v}{c}$ by $1$. If $\cc{v}{c} = 0$, then move $c$ from $\cp{v}$ to $\colorarr{v}$, delete the mutual pointers between $\pp{c}$ and $c$, and add mutual pointers between $\po{c}$ and $c$.
    \item Add $w$ to $\down{v}$, create mutual pointers between $w$ and $\po{c}$ (where $c$ is $w$'s color), delete $v$ from $\down{w}$, and add $v$ to $\up{w}[\ell]$. Let $v$'s color be $c'$. Delete the mutual pointers between $v$ and $\po{c'}$. Add mutual pointers between $v$ and $\pp{c'}$. If $c'$ is currently in $\colorarr{w}$, move $v$'s color, $c'$, in $\colorarr{w}$ to $\cp{w}$, delete mutual pointers between $c'$ and $\po{c'}$, and add mutual pointers between $c'$ and $\pp{c'}$. Increment $\cc{w}{c'}$ by 1.
    \item Add mutual pointers between all elements $w \in \down{v}, v \in \up{w}[\ell],v \in N_w,w \in N_v$.
    \item Maintain mutual pointers between $\cP_v[i_{c}]$, $\po{c}$, and $\pp{c}$. Maintain mutual pointers between $\cP_w[i_{c'}]$, $\po{c'}$, and $\pp{c'}$.
\end{enumerate}

The full pseudocode of this procedure can be found in Fig.~\ref{fig:setlvl}.

\section{Pseudocode}\label{app:pseudocode}
In the below pseudocode, we do not describe (most of) the straightforward but tedious pointer creation procedures. We assume that the corresponding pointers
are created according to the procedure described in Section~\ref{sec:update}. In the cases where the pointer change is significant, we describe it in the pseudocode.

\begin{figure}[!ht]
\fbox{
        \begin{minipage}[t]{165mm}
$\ins(u,v)$:
\begin{enumerate}
\item $\nhbr_v \leftarrow \nhbr_v \cup \{u\}$;
\item $\nhbr_u \leftarrow \nhbr_u \cup \{v\}$;
\item If $\ell(u) > \ell(v)$:
   \begin{enumerate}
   \item $\down{u} \leftarrow \down{u} \cup \{v\}$;
   \item $\up{v}[\ell(u)] \leftarrow \up{v}[\ell(u)] \cup \{u\}$;
   \end{enumerate}
\item Else if $\ell(u) = \ell(v)$:
	\begin{enumerate}
		\item $\up{v}[\ell(u)] \leftarrow \up{v}[\ell(u)] \cup \{u\}$;
		\item $\up{u}[\ell(v)] \leftarrow \up{u}[\ell(v)] \cup \{v\}$;
	\end{enumerate}
\item Else:
   \begin{enumerate}
   \item $\down{v} \leftarrow \down{v} \cup \{u\}$;
   \item $\up{u}[\ell(v)] \leftarrow \up{u}[\ell(v)] \cup \{v\}$;
   \end{enumerate}
\item $\updateins(u, v, c_u, c_v)$;
\item If $\text{color}(u) = \text{color}(v)$:  /* if $u$ and $v$ have the same color */
   \begin{enumerate}
   \item  $\recolor(u)$;
   \end{enumerate}
\end{enumerate}
\end{minipage}}
\caption{Handling edge insertion $(u,v)$.
}
\label{fig:handleIns}
\end{figure}

\begin{figure}[!ht]
\fbox{
        \begin{minipage}[t]{165mm}
$\del(u,v)$:
\begin{enumerate}
\item $\nhbr_v \leftarrow \nhbr_v \setminus \{u\}$;
\item $\nhbr_u \leftarrow \nhbr_u \setminus \{v\}$;
\item If $v \in \down{u}$:
   \begin{enumerate}
   \item $\down{u} \leftarrow \down{u} \setminus \{v\}$;
   \item $\up{v}[\ell(u)] \leftarrow \up{v}[\ell(u)] \setminus \{u\}$;
   \end{enumerate}
\item Else if $u \in \down{v}$:
   \begin{enumerate}
   \item $\down{v} \leftarrow \down{v} \setminus \{u\}$;
   \item $\up{u}[\ell(v)] \leftarrow \up{u}[\ell(v)] \setminus \{v\}$;
   \end{enumerate}
\item Else:
	\begin{enumerate}
		\item $\up{u}[\ell(v)] \leftarrow \up{u}[\ell(v)] \setminus \{v\}$;
		\item $\up{v}[\ell(u)] \leftarrow \up{v}[\ell(u)] \setminus \{u\}$;
	\end{enumerate}
\item Remove all associate color pointers and shift colors between $\colorarr{u}$, $\cp{u}$ and $\colorarr{w}$, $\cp{w}$ as necessary;
\end{enumerate}
\end{minipage}}
\caption{Handling edge deletion $(u,v)$.
}
\label{fig:handleDel}
\end{figure}

\begin{figure}[!ht]
\fbox{
        \begin{minipage}[t]{165mm}
$\setlvl(v,\ell)$:
\begin{enumerate}
    \item For all $w \in \down{v}$:  /* update $\up{w}$ regarding $v$'s new level  */
       \begin{enumerate}
        \item $\up{w}[\ell(v)] \leftarrow \up{w}[\ell(v)] \setminus \{v\}$;
        \item $\up{w}[\ell] \leftarrow \up{w}[\ell] \cup \{v\}$;
        \end{enumerate}

\item If $\ell < \ell(v)$: /* in this case the level of $v$ is decreased by at least one */
    \begin{enumerate}
    \item For all $w \in \down{v}$ such that $\ell \leq \ell(w) < \ell(v)$:
     /* reassign color pointers*/
       \begin{enumerate}
        \item $\down{v} \leftarrow \down{v} \setminus \{w\}$;
        \item Delete mutual pointers between $w$ and $\po{\text{color}(w)}$;
        \item If $\text{color}(w) \in \colorarr{v}$:
        \begin{enumerate}
            \item Move $\text{color}(w) \in \colorarr{v}$ to $\cp{v}$;
            \item Delete mutual pointers between $\text{color}(w)$ and $\po{\text{color}(w)}$;
            \item Create mutual pointers between $\text{color}(w)$ and $\pp{\text{color}(w)}$;
        \end{enumerate}
        \item Increment $\cc{v}{\text{color}(w)}$ by $1$;
        \item $\up{v}[\ell(w)] \leftarrow \up{v}[\ell(w)] \cup \{w\}$;
        \item Create mutual pointers between $w$ and $\pp{\text{color}(w)}$ if such pointers do not already exist;
        \item $\up{w}[\ell] \leftarrow \up{w}[\ell] \setminus \{v\}$;
        \item Delete mutual pointers between $\text{color}(v)$ and $\pp{\text{color}(v)}$;
        \item Decrement $\cc{w}{\text{color}(v)}$ by $1$;
        \item If $\cc{w}{\text{color}(v)} = 0$:
        \begin{enumerate}
            \item Move $\text{color}(v) \in \cp{w}$ to $\colorarr{w}$;
            \item Delete mutual pointers between $\text{color}(v)$ and $\pp{\text{color}(v)}$;
            \item Create mutual pointers between $\text{color}(v)$ and $\po{\text{color}(v)}$;
        \end{enumerate}
        \item $\down{w} \leftarrow \down{w} \cup \{v\}$;
        \item Create mutual pointers between $v$ and $\po{\text{color}(v)}$ if such pointers do not already exist;
        \end{enumerate}
   \end{enumerate}

\item If $\ell > \ell(v)$: /* in this case the level of $v$ is increased by at least one */ \footnote{For the sake of clarity and brevity,
we do not describe the pointer deletions, creations, and changes in the case where $\ell > \ell(v)$ because these changes
are almost identical to the changes given above for the case $\ell < \ell(v)$.}
 \begin{enumerate}
    \item For all $i = \ell(v),\ldots,\ell-1$ and all $w \in \up{v}[i]$:
       \begin{enumerate}
       \item $\up{v}[i] \leftarrow \up{v}[i] \setminus \{w\}$;
       \item Decrement $\cc{v}{\text{color}(w)}$ by $1$;
       \item If $\cc{v}{\text{color}(w)} = 0$: Move $\text{color}(w)$ from $\cp{v}$ to $\colorarr{v}$;
       \item $\down{v} \leftarrow \down{v} \cup \{w\}$;
       \item $\down{w} \leftarrow \down{w} \setminus \{v\}$;
       \item $\up{w}[\ell] \leftarrow \up{w}[\ell] \cup \{v\}$;
       \item If $\text{color}(v) \in \colorarr{w}$: Move $\text{color}(v)$ from $\colorarr{w}$ to $\cp{w}$;
       \item Increment $\cc{w}{\text{color}(v)}$ by $1$;
       \end{enumerate}
\end{enumerate}
\item  $\ell(v) \leftarrow \ell$;
\end{enumerate}
\end{minipage}}
\caption{Setting the old level $\ell(v)$ of $v$ to $\ell$.}
\label{fig:setlvl}
\end{figure}

\begin{figure}[!ht]
\fbox{
        \begin{minipage}[t]{165mm}
$\recolor(v)$:
\begin{enumerate}
\item If $\phi_v(\ell(v) + 1) < \base^{\ell(v)+2}$: $\detc(v)$;
\item Else $\randc(v)$;
\end{enumerate}
\end{minipage}}
\caption{Recoloring a vertex that collides with the color of an adjacent vertex after an edge insertion.
}
\label{fig:recolor}
\end{figure}

\begin{figure}[!ht]
\fbox{
        \begin{minipage}[t]{165mm}
$\detc(v)$:
\begin{enumerate}
\item For all $c \in \colorarr{v}$:
       \begin{enumerate}
        \item If $c$ is not occupied by any vertex $w \in \down{v}$ and $c \in \colorarr{v}$: /* if $c$ is a blank color, color $v$ with $c$ */
        \begin{enumerate}
        \item Set $\chi\left(i_v\right) = c$;
        \item For all $w \in \down{v}$:
        \begin{enumerate}
            \item $\update(v, w, c)$.
        \end{enumerate}
        \item $\setlvl(v,-1)$;
        \item terminate the procedure; /* Note that the procedure will always terminate within this if statement because a blank color always exists by definition of $(\Delta + 1)$-coloring. */
        \end{enumerate}
   \end{enumerate}
\end{enumerate}
\end{minipage}}
\caption{Coloring $v$ deterministically with a blank color.
It is assumed that $\phi_v(\ell(v) + 1)  < \base^{\ell(v)+2}$.
}
\label{fig:detr}
\end{figure}

\begin{figure}[!ht]
\fbox{
        \begin{minipage}[t]{165mm}
$\randc(v)$:
\begin{enumerate}
\item $\ell^* \leftarrow \ell(v)$;
\item while $\phi_v(\ell^*+1) \ge \base^{\ell^*+2}$: $\ell^* \leftarrow \ell^* + 1$;
\\ /* $\ell^*$ is the minimum level after $\ell(v)$ with $\phi_v(\ell^*+1) < \base^{\ell^*+2}$ */
\item $\setlvl(v,\ell^*)$; /* after this call $\ell(v) = \ell^*$ and $\base^{\ell^* + 1} \le \dout(v) = \phi_v(\ell^*) < \base^{\ell^* + 2}$ */

\item Pick a blank or unique color $c$ from $\colorarr{v}$ uniformly at random;
\\ /* $c$ is chosen with probability at most $2/\base^{\ell^* + 1}$ and $\ell(w) \le \ell^* -1$ */
\item If $c \neq \text{color}(v)$: /* If $c$ is not the previous color of $v$. */
\begin{enumerate}
\item Set $\chi\left(i_v\right) = c$;
\item For all $z \in \down{v}$:
        \begin{enumerate}
            \item $\update(v, z, c)$.
        \end{enumerate}
\end{enumerate}
\item If $c$ is a unique color (let $w \in \phi_v\left(\ell^*\right)$ be the vertex that is colored with $c$):
\begin{enumerate}
    \item $\recolor(w)$;
\end{enumerate}
\end{enumerate}
\end{minipage}}
\caption{Coloring $v$ at level $\ell^*$ higher than $\ell(v)$, with a random blank or unique color of level lower than $\ell^*$.
If the procedure chose a unique color, it calls $\recolor$ (which may call itself recursively) to color $w$.
It is assumed that $\phi_v(\ell(v) + 1)  \ge \base^{\ell(v)+2}$.
}
\label{fig:rand}
\end{figure}

\begin{figure}[!ht]
\fbox{
        \begin{minipage}[t]{165mm}
$\updateins(v, w, c_v, c_w)$:
\begin{enumerate}
	\item If $\ell(v) >\ell(w)$:
	\begin{enumerate}
    \item Locate $v \in \up{w}[\ell(v)]$;
    \item Delete the mutual pointers (if they exist) between $v$ and $\pp{c'}$ where $c'$ is $v$'s previous color; /* Note that $v$'s previous color could be located by following pointers from $v$. */
    \item Decrement $\cc{w}{c'}$ by $1$ if pointers were deleted in the previous step; /* If no pointers were deleted, then $w$ had no knowledge of $v$'s previous color and we do not need to decrement */
    \item If $\cc{w}{c'} = 0$:
    \begin{enumerate}
        \item Move $c'$ from $\cp{w}$ to $\colorarr{w}$ by appending $c'$ to the end of the linked list representing $\colorarr{w}$;
    \end{enumerate}
    \item Locate $\pp{c_v}$ by following pointers from $\cP_w$;
    \item Create mutual pointers between $v$ and $\pp{c_v}$;
    \item Increment $\cc{w}{c_v}$ by $1$;
    \item If $c_v$ is in $\colorarr{w}$:
    \begin{enumerate}
        \item Move $c_v$ from $\colorarr{w}$ to $\cp{w}$ by appending $c_v$ to the end of the linked list representing $\cp{w}$;
    \end{enumerate}
    \item Locate $w \in \down{v}$. 
    \item Delete the mutual pointers (if they exist) between $w$ and $\po{c''}$ where $c''$ is $w$'s previous color;
    \item Locate $\po{c_v}$ by following pointers from $\cP_v$;
    \item Create mutual pointers between $w \in \down{v}$ and $\po{c_w}$;
\end{enumerate}
\item Else if $\ell(v) < \ell(w)$:
	\begin{enumerate}
		\item /* Do the above except switch the roles of $v$ and $w$ as well as $c_v$ and $c_w$. */
	\end{enumerate}
\item Else:
		\begin{enumerate}
			\item Locate $v \in \up{w}[\ell(v)]$ and $w \in \up{v}[\ell(w)]$;
			\item /* Do the above procedure given in the case when $\ell(v)> \ell(w)$ for $v \in \up{w}[\ell(v)]$ for both $v \in \up{w}[\ell(v)]$ and $w \in \up{v}[\ell(w)]$.*/
		\end{enumerate}
\end{enumerate}
\end{minipage}}
\caption{Updates the data structures with $v$ and $w$'s colors when an edge is inserted between $v$ and $w$.}
\label{fig:update-ins}
\end{figure}

\begin{figure}[!ht]
\fbox{
        \begin{minipage}[t]{165mm}
$\update(v, c_v)$:
\begin{enumerate}
	\item For $w \in \down{v}$:
	\begin{enumerate}
    \item Locate $v \in \up{w}[\ell(v)]$;
    \item Delete the mutual pointers (if they exist) between $v$ and $\pp{c'}$ where $c'$ is $v$'s previous color; /* Note that $v$'s previous color could be located by following pointers from $v$. */
    \item Decrement $\cc{w}{c'}$ by $1$ if pointers were deleted in the previous step; /* If no pointers were deleted, then $w$ had no knowledge of $v$'s previous color and we do not need to decrement */
    \item If $\cc{w}{c'} = 0$:
    \begin{enumerate}
        \item Move $c'$ from $\cp{w}$ to $\colorarr{w}$ by appending $c'$ to the end of the linked list representing $\colorarr{w}$;
    \end{enumerate}
    \item Locate $\pp{c_v}$ by following pointers from $\cP_w$;
    \item Create mutual pointers between $v$ and $\pp{c_v}$;
    \item Increment $\cc{w}{c_v}$ by $1$;
    \item If $c_v$ is in $\colorarr{w}$:
    \begin{enumerate}
        \item Move $c_v$ from $\colorarr{w}$ to $\cp{w}$ by appending $c_v$ to the end of the linked list representing $\cp{w}$;
    \end{enumerate}
   \end{enumerate}
   \end{enumerate}
\end{minipage}}
\caption{Updates the color pointers of $v$ of all of $v$'s down-neighbors when $v$ changes color.}
\label{fig:update}
\end{figure}

\end{document}